\definecolor{kirmizi}{RGB}{204,0,0}
\newcommand{\E}{\mathbb{E}}
\newcommand*\rfrac[2]{{}^{#1}\!/_{#2}}
\newcommand{\vect}[1]{\boldsymbol{#1}}
\newtheorem{mylemma}{Lemma}
\begin{document}
\title{Distributed Cluster Formation and Power-Bandwidth Allocation for Imperfect NOMA in DL-HetNets}
\author{
Abdulkadir~Celik,~\IEEEmembership{Member,~IEEE,} Ming-Cheng~Tsai,~\IEEEmembership{Student Member,~IEEE,} Redha~M.~Radaydeh,~\IEEEmembership{Senior Member,~IEEE,} Fawaz~S.~Al-Qahtani,~\IEEEmembership{Member,~IEEE,} and~Mohamed-Slim~Alouini,~\IEEEmembership{Fellow,~IEEE}

\thanks{ A. Celik, Ming-Cheng~Tsai, and M-S. Alouini are with Computer, Electrical, and Mathematical Sciences and Engineering Division at King Abdullah University of Science and Technology (KAUST), Thuwal, KSA. R. M. Radaydeh is with the Electrical Engineering Program, Department of Engineering and Technology, Texas A\&M University-Commerce, Commerce, TX, USA. F. S. Al-Qahtani is with Research, Development, Innovation (RDI) Division of Qatar Foundation, Doha, Qatar. A part of this paper has been presented at IEEE GLOBECOM 2018, Singapore \cite{celik2017DLNOMA}. Corresponding author: Abdulkadir Celik (abdulkadir.celik@kaust.edu.sa).
}
\thanks{This work was supported by NPRP from the Qatar National Research Fund under grant no. 8-1545-2-657.}
}
\markboth{To Appear in IEEE Transactions on Communications}{Celik \MakeLowercase{\textit{et al.}}: Distributed Cluster Formation and Power-Bandwidth Allocation for Imperfect NOMA in DL-HetNets}
\maketitle

\begin{abstract}
In this paper, we consider an non-ideal successive interference cancellation (SIC) receiver based imperfect non-orthogonal multiple access (NOMA) schemes whose performance is limited by three factors: 1) Power disparity \& sensitivity constraints (PDSCs), 2) Intra-cluster interference (ICRI), and 3) Intercell-interference (ICI). By quantifying the residual interference with a fractional error factor (FEF), we show that NOMA cannot always perform better than orthogonal multiple access (OMA) especially under certain receiver sensitivity and FEF levels. Assuming the existence of an offline/online ICI management scheme, the proposed solution accounts for the ICI which is shown to deteriorate the NOMA performance particularly when it becomes significant compared to the ICRI. Then, a distributed cluster formation (CF) and power-bandwidth allocation (PBA) approach are proposed for downlink (DL) heterogeneous networks (HetNets) operating on the imperfect NOMA. We develop a hierarchically distributed solution methodology where BSs independently form clusters and distributively determine the power-bandwidth allowance of each cluster. A generic CF scheme is obtained by creating a multi-partite graph (MPG) via partitioning user equipments (UEs) with respect to their channel gains since NOMA performance is primarily determined by the channel gain disparity of cluster members. A sequential weighted bi-partite matching method is proposed for solving the resulted weighted multi-partite matching problem. Thereafter, we present a hierarchically distributed PBA approach which consists of the primary master, secondary masters, and slave problems. For a given cluster power and bandwidth pair, optimal power allocations and Lagrange multipliers of slave problems are derived in closed-form. While power allowance of clusters is updated by the secondary masters based on dual variables of slave problems, bandwidth proportions of clusters are iteratively allocated by the primary master as per the utility achieved by the secondary masters at the previous iteration. Finally, the proposed CF and PBA approaches under the operation of imperfect NOMA are investigated and compared to the OMA scheme by extensive simulations results in DL-HetNets.
\end{abstract}

\begin{IEEEkeywords} 
Imperfect SIC, residual interference,  intra-cell interference, inter-cell interference, hierarchical decomposition, distributed resource allocation. multi-partite matching.
\end{IEEEkeywords}
\maketitle
\newpage
\section{Introduction}
\label{sec:intro}
\lettrine{U}{ltra-Dense} networks have been considered to be a promising solution for the fifth generation (5G) networks as network densification has the ability to boost network coverage and capacity while reducing operational and capital expenditures \cite{wong2017key}. However, traditional HetNets dedicate radio resources to a certain user equipment (UE) either in time or frequency domains, i.e., orthogonal multiple access (OMA), where the number of served UEs at a given time instant is strictly limited by the availability of the radio resources. Considering the expected explosive number of devices, required massive connectivity necessitates more spectrum efficient access schemes with extended coverage.

In this regard, non-orthogonal multiple access (NOMA) has recently attracted attention by permitting to share the same radio resources among a set of UEs \cite{Dai15}, which is also referred to as a NOMA cluster. In particular, NOMA has the capability of providing a higher spectral efficiency while supporting a large number of UEs over the same radio resource. Employing successive interference cancellation (SIC), power domain NOMA can serve multiple UEs at different power levels by ensuring that some UEs can cancel some others' interference out before decoding their own signal. In order to differentiate the desired signal from noise and undecoded signals, the SIC receivers require the disparity of received power levels with a hardware sensitivity gap \cite{Ali2016}, which is referred to as \textit{power disparity and sensitivity constraints} (PDSCs). Moreover, SIC receivers can still observe some residual interference after cancellation due to the propagation of detection and estimation errors, which is often quantified with a \textit{fractional error factor}  \cite{Andrews2005iterative}. In such a case, the ICRI is mainly because of the uncancellable interference and residual interference due to SIC inefficiency. Hence, performance gain achieved by NOMA is primarily limited by imperfections and constraints of SIC receivers and power control policy. Furthermore, cluster formation strategy is an inherently crucial aspect to maximize the benefit offered by NOMA as it is shown that NOMA gain is determined by channel gain discrepancy of cluster members \cite{Ding2015TVT}. Due to its combinatorial nature, CF is a challenging task to accomplish especially in HetNets and necessitates a fast yet high-performance clustering methods. 

In NOMA based downlink (DL) heterogeneous networks (HetNets), all clusters compete for a commonly shared bandwidth whereas clusters within a certain cell contend for the available power of the serving base station (BS). On the other hand, members of a cluster have to share the total power allocated by the BS to their cluster. Taking all these different entities and inter-dependencies into consideration, a centralized CF and PBA scheme require an excessive amount of message passing and coordination among the BSs. To overcome such a communication overhead, it is desirable to have a distributed CF and PBA approach where BSs independently form their own clusters and decide on power and bandwidth allowances, which is the main focus of this paper.

\subsection{Related Works}
\label{sec:rel}

Recent efforts on power domain NOMA can be exemplified as follows:  In \cite{Ali2016}, authors formed clusters based on channel gain ordering and derive closed-form power allocations for a given cluster power and bandwidth pair. The impact of UE selection/clustering is investigated in \cite{Ding2015TVT} for a two-UE DL-NOMA system with fixed and cognitive radio inspired power allocation schemes. The work in \cite{Liu2016fairness} addressed max-min fair UE clustering problem using three different sub-optimal approaches. Authors of \cite{Liu2017Joint} iteratively built clusters where each iteration jointly optimize beam-forming and power allocation for given clusters. Another work considered beam-forming and power allocation of a multiuser multiple-input-multiple-output (MIMO) NOMA system where two-UE clusters are formed from high and low channel gain UEs with the consideration of channel gain correlations \cite{Ali2017beam}. We investigate cluster formation and resource allocation problems for DL and UL HetNets in \cite{celik2017DLNOMA} and \cite{celik2017ULNOMA}, respectively. 

Joint power and channel allocation for the NOMA system are addressed in \cite{Lei2016} wherein a near optimal solution was proposed by combining Lagrangian duality and dynamic programming. By using Lyapunov optimization framework, the short and long-term network utility is maximized by joint data rate and power control in \cite{Bao2017joint}. In \cite{Elbamby2017}, authors study the problem of resource optimization, mode selection and power allocation in wireless cellular networks under the assumption of full-duplex NOMA capability and queue stability constraints. Sun et. al. considered joint power and subcarrier allocation for full-duplex multi-carrier (MC) NOMA systems for UL and DL transmission of a single BS \cite{Sun2017optimal}. MC-NOMA is also studied in \cite{Wei2017optimal} where authors jointly design the power and rate allocation, user scheduling, and successive SIC decoding policy to minimize the power consumption. In \cite{Zhu2017optimal}, the power is controlled to achieve the different objective for given channel allocations. Sub-Channel assignment, power allocation, and user scheduling are addressed by formulating the sub-channel assignment problem as equivalent to a many-to-many two-sided user-subchannel matching game \cite{Di2016}.  By only utilizing a single scalar, an $\alpha$-fairness approach is developed to achieve different UE fairness levels in \cite{Xu2017optimal}.  In\cite{Abbasi2017resource}, authors investigated resource allocation for hybrid NOMA system subject to proportional rate constraints. The work in \cite{Xu2017resource} focused on resource allocation in energy-cooperation enabled two-tier NOMA HetNets with energy harvesting BSs. Authors of \cite{Zhao2017spectrum} allocated spectrum and power using a many-to-one matching game and sequential convex programming, respectively. 

  Since it can cause severe performance degradation, a practical design of NOMA must account for real-life imperfections which can be a result of imperfect channel state information (CSI), residual interference due to the FEF, or PDSCs of SIC receivers. In \cite{Yang2016}, authors investigate the impact of partial CSI on the performance of the NOMA networks. They first consider an imperfect CSI model where the BS and UEs have an estimate of the channel and a priori knowledge of the variance of the estimation error. Analytical and numerical findings demonstrated that the average sum rate of NOMA systems can always outperform conventional OMA. Based on the second order statistics, authors show that NOMA is still superior to conventional OMA. In opportunistic one-bit feedback has been used for NOMA in \cite{Xu2016} where a closed-form expression for the common outage probability is derived along with the optimal diversity gains under short and long-term power constraints. As discussed in \cite{SIC_OFDMA}, the SIC receiver performance is mainly determined by the tradeoff between computational complexity and error propagation during the interference cancellation (IC) process. Albeit their significant contributions, to the best of authors' knowledge, none of the aforementioned works consider the residual interference due to the SIC error propagation. Excluding \cite{Ali2016}, these works also do not take the PDSCs and cluster formation design into account. Furthermore, challenges of the HetNet environment is only addressed in \cite{Xu2017resource, Zhao2017spectrum} where authors do not consider a distributed approach. To the best of our knowledge, this paper is the first work to consider an imperfect NOMA with residual interference and to develop a distributed CF and PBA for DL-HetNets.
\subsection{Main Contributions}
\label{sec:cont}
Our main contributions can be summarized as follows:
\begin{itemize}
\item 
In practice, constraints and imperfections of SIC receivers constitute a limiting factor on the achievable gain by NOMA. This work is the first to consider the impacts of residual interference on NOMA performance due to the non-ideality of the SIC receivers. Although our work is not aimed at proposing an ICI management scheme, the proposed power allocation method is also capable of accounting for the leftover ICI from any offline/online ICI management scheme. Obtained results demonstrate that NOMA cannot always perform better than the OMA under certain FEF levels and receiver sensitivity values. It is also shown that being agnostic to the ICI can severely degrade the performance especially when it becomes significant in comparison with the ICRI, which clearly indicates the necessity for an effective ICI management scheme.  

\item  
After formulating a centralized CF and PBA as an mixed-integer non-linear programming (MINLP) problem, we develop a distributed solution methodology where BSs independently form clusters and determine the power-bandwidth allowance of each cluster. Noting that existing solutions have contended to basic NOMA clusters of size two, a generic CF scheme is obtained by creating a \textit{multi-partite graph} (MPG) via partitioning UEs with respect to their channel gains.  A sequential \textit{weighted bi-partite matching} (WBM) method is proposed for solving the resulted \textit{weighted  multi-partite matching} (WMM) based CF problem in cubic order. If edges are merely weighted by the channel gain disparity of UEs, it is proven that the complexity of solving the WMM can even be reduced to quasi-linear order without executing any matching algorithm. Obtained results show that proposed CF delivers a performance very close to the exhaustive centralized solution with a significantly reduced processing load. 


\item 
By employing primal and dual decomposition methods, we propose a hierarchically distributed PBA approach which consists of a primary master problem, secondary master problems, and slave problems. For a given cluster power and bandwidth allowance, optimal power allocations and Lagrange multipliers of imperfect NOMA are derived in closed-form subject to PDSCs and QoS constraints. Closed-form solutions are used by secondary master problems to update total power allowance of clusters. Based on achieved cluster utilities, the primary master problem iteratively updates the bandwidth allowances to maximize the network utility and broadcasts updated bandwidth allocations to the BSs. Finally, we show that proposed algorithm greatly reduces the communication overhead and investigate the NOMA performance in comparison with OMA under different BS density, traffic offloading bias factor, UE density, and cluster size scenarios for DL-HetNets. 
\end{itemize}

\begin{table}
\begin{center}
\scriptsize
\begin{tabular}{ |l|l| }
  \hline
  \multicolumn{2}{|c| }{\textbf{Table of Notations}} \\
  \hline
  \textbf{Not.} & \textbf{Description} \\ 
\hline
$\mathcal{C}$ & Set of $C$ BSs, $\mathcal{C} \triangleq \{c | \: 0 \leq c \leq S \}$ where $c=0$ is the MBS.  \\ 
\hline
$\mathcal{U}$ & Set of $U$ UEs, $\mathcal{U} \triangleq \bigcup_c \mathcal{U}_c$ where $\mathcal{U}_c$ is the set of $U_c$ UEs of BS$_c$  \\
\hline
$\mathcal{R}$ & Set of all clusters, $\mathcal{R} \triangleq \bigcup_c \mathcal{R}_c$ where $\mathcal{R}_c$ is the set of $R_c$ clusters.\\ 
\hline
$\mathcal{K}_c^r$ & Set of users belong to $r^{th}$ cluster of BS$_c$, $1 \leq r \leq R_c$, $1 \leq c \leq C$.\\ 
\hline
$P_c$ & BS transmission power, i.e., $P_c=P_m/P_c=P_s$ for MBS/SBSs.\\ 
\hline
$N_0$ & Noise power spectral density.\\ 
\hline
$B$ & Total available bandwidth for all UEs.\\  
\hline
$\alpha_{c,r}^i$ & Binary variable for cluster membership relations.\\ 
\hline
$\varpi_c^r$ & Total power fraction allocated for clusters, $\sum_r \varpi_c^r \leq 1.$\\ 
\hline
$\omega_{c,r}^i$ & Power allocation variable, $\omega_{c,r}^i \leq \alpha_{c,r}^i$, $\sum_{i \in \mathcal{K}_c^r} \omega_{c,r}^i \leq \varpi_c^r$.\\ 
\hline
$\theta_c^r$ & Bandwidth allocated to UEs within $\mathcal{K}_c^r$, $\theta_c^r \in [0,1]$, $\sum_{c,r}\theta_c^r \leq 1$. \\ 
\hline
$\gamma_{c,r}^i$ & SINR of UE$_i \in \mathcal{K}_c^r$.\\ 
\hline
$\epsilon_i$ & SIC error factor of UE$_i$.\\ 
\hline
$p_\Delta$ & Receiver sensitivity og UE$_i$.\\ 
\hline
$g_c^i$ & Composite channel gain from BS$_c$ to UE$_i$.\\ 
\hline
$\mathbb{C}_{c,r}^i$ & Achievable capacity of UE$_i$ which requires $\mathbb{C}_{c,r}^i \geq \bar{\mathbb{C}}_{c,r}^i$\\ 
\hline
$L_c$ & Affordable number of cancellations.\\ 
\hline
$R_c$ & Number of clusters of BS$_c$, $R_c=\lceil U_c/ L_c \rceil$.\\ 
\hline
$\mathcal{P}_c^\ell$ & Set of UE partitions of BS$_c$, $0 \leq \ell \leq L_c$.\\ 
\hline
$\mathcal{E}_{c,l}^{r,s}$ & Edge weights for multi-partite graphs.\\ 
\hline
$\mathcal{L}_c^r$ & Lagrange function related to $\mathcal{K}_c^r$.\\ 
\hline
$\lambda_c^r$ & Lag. multiplier related to cluster power consumption.\\ 
\hline
$\mu_{c,r}^i$ & Lag. multiplier related to QoS constraint of UE$_i$ \\ 
\hline
$\varphi_{c,r}^i$ & Lag. multiplier related to PDSC of UE$_i$. \\ 
\hline
$\nu/\upsilon$ & Step size for updates of distributed algorithm.\\ 
\hline
$\rho_{c,r}^i$ & Composite parameter defined as $\rho_{c,r}^i \triangleq N_0 B \theta_c^r (P_c g_c^i)^{-1}$.\\ 
\hline
$q_{c,r}^i$ & Composite parameter defined as $q_{c,r}^i \triangleq 2^{\bar{\mathbb{C}}_{c,r}^i / B \theta_c^r}$.\\ 
\hline
$\Delta_{c}^i$ & Composite parameter defined as $\Delta_c^i \triangleq p_\Delta (P_c g_c^i)^{-1}$.\\ 
\hline
\end{tabular}
\caption{Table of Notations}
\label{Ts:notations}
\end{center}
\end{table}
\normalsize

\subsection{Notations and Paper Organization}
\label{sec:org}
Throughout the paper, sets and their cardinality are denoted with calligraphic and regular uppercase letters (e.g., $| \mathcal{A}|=A$), respectively. Vectors and matrices are represented in lowercase and uppercase boldfaces (e.g., $\vect{a}$ and $\vect{A}$), respectively. Superscripts $c$, $r$, and $i$ are used for indexing BSs/cells, clusters, and UEs, respectively. The optimal values of variables are always marked with superscript $\star$, e.g., $x_{c,r}^{i,\star}$ and $y_{c,r}^\star$. 

The remainder of the paper is organized as follows: Section \ref{sec:sys_mod} introduces the system model along with constraints and imperfections of SIC. Section \ref{sec:CF-PBA} first formulates the optimal CF and PBA problem. Section \ref{sec:CF} and Section \ref{sec:PBA} address proposed CF and distributed PBA methods, respectively. Numerical results are presented in Section \ref{sec:res} and Section \ref{sec:conc} concludes the paper with a few remarks.

\begin{figure}[t!]
    \centering
        \includegraphics[width=0.35\textwidth]{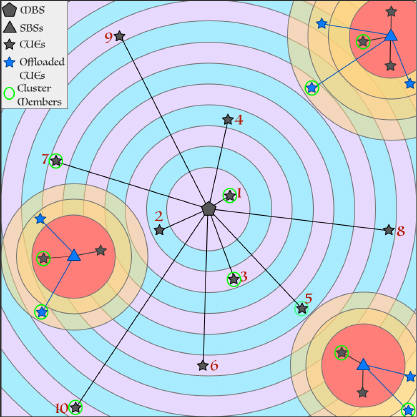}
        \caption{Illustration of clustering and traffic offloading in DL-HetNets.}
        \label{fig:sys}
\end{figure}

\section{System Model}
\label{sec:sys_mod}
\subsection{Network Model}
We consider DL transmission of a 2-tiered HetNet where each tier represents a particular cell class, i.e., tier-1 consists of a single macrocell and tier-2 comprises of smallcells as shown in Fig. \ref{fig:sys}. Denoting the number of small BSs (SBSs) as $S$, the index set of all BSs is represented by $\mathcal{C}=\{c | \: 0 \leq c \leq S \}$ where $c=0$ and $1 \leq c \leq S$ represent the macro BS (MBS) and SBSs, respectively. We note that the terms BS, cell and their indices are used interchangeably throughout the paper. Maximum transmission powers of BSs are generically denoted as $P_c$ which equals to $P_m$ and $P_s$ for the MBS and SBSs, respectively. Furthermore, index set of all $U \triangleq \sum_c U_c$ UEs is given as $\mathcal{U}\triangleq\bigcup_c \mathcal{U}_c$ where $\mathcal{U}_c$ is the set of $U_c$ UEs associated with BS$_c$. UE-BS association is based on received signal strength (RSS) information with a certain offloading bias factor \cite{Celik2017Resource,celik2017joint}. A simple traffic offloading example is shown in Fig. \ref{fig:sys} where yellow-colored circles represent the offloading regions corresponding to different  bias factors and blue-colored stars represent the UEs offloaded from the MBS. Likewise, the set of all $R$ clusters are denoted as $\mathcal{R}\triangleq \bigcup_c \mathcal{R}_c$ where $\mathcal{R}_c$ is the set of $R_c$ clusters of BS$_c$. Hence, $\mathcal{U}_c$ is partitioned into $R_c$ clusters such that $\mathcal{K}_c^r$ symbolizes the set of $K_c$ UEs within cluster $r$, that is, $\mathcal{U}_c=\bigcup_{r\in \mathcal{R}_c} \mathcal{K}_c^r$ and  $R_c=\left \lceil \frac{U_c}{K_c} \right \rceil$. Cluster $\mathcal{K}_c^r$ is allowed to utilize $\theta_c^r \in [0,1]$ portion of the entire DL bandwidth, $B$, $\forall c, r$. Since each cluster has its own dedicated bandwidth, total number of clusters and bandwidth proportions are equivalent. Also noting that all $R_c$ clusters of cell $c$ share the DL transmission power of the BS$_c$, power fraction allocated for $\mathcal{K}_c^r$ is defined as $\varpi_c^r \in [0,1]$, $\sum_{r\in \mathcal{R}_c}\varpi_c^r \leq 1, \: \forall c$. 

\subsection{Imperfections and Constraints of SIC}
\label{sec:SIC}
The SIC receiver first decodes the stronger interferences and then subtracts them from the broadcasted signal until they obtain the desired signal. Accordingly, the received interference strengths are required to be sufficiently higher in comparison to the intended signal for a  successful IC process. Accordingly, BSs broadcast the superposed signals with low power level for high channel gain UEs and high power level signals for low channel gain UEs. In this case, the highest channel gain UE can cancel all the interference while being allocated to the lowest power level. On the other hand, the lowest channel gain UE cannot cancel any interference while being allocated to the highest power level. In such a way that the performance of the entire cluster is enhanced in a fair manner. 

To be more specific, let us now focus on cluster $r$ of BS$_c$, $\mathcal{K}_c^r = \{i | \:\alpha_{c,r}^i=1, \: g_c^{i-1}\geq g_c^i \geq g_c^{i+1}, \forall i \}$ where $\alpha_{c,r}^{i} \in \{0,1\}$ is a binary indicator for the cluster membership and cluster members are sorted in the descending order of the channel gains, $g_c^i$, without loss of generality. Such clusters are demonstrated in Fig. \ref{fig:sys} with green circles around the UEs, for example, first cluster of the MBS is $\mathcal{K}_0^1=\{1,3,5,7,10\}$. As per SIC principles, NOMA allocates transmission power weights as $\omega_{c,r}^{i-1} < \omega_{c,r}^i < \omega_{c,r}^{i+1}, \: \forall i \in \mathcal{K}_c^r$, hence, normalized received power at UE$_i$ is given as
\begin{equation}
\label{eq:order}
\overbrace{\underbrace{  \omega_{c,r}^{K_c^r} g_c^i > \ldots > \omega_{c,r}^{i+1} g_c^i > }_{\text{Higher Rank Decoding Order: }\mathcal{O}_i^h} }^{\text{Cancellable Signals}} \overbrace{\underbrace{\omega_{c,r}^{i} g_c^i }_{\text{Signal}}}^{\text{Desired}} \overbrace{\underbrace{> \omega_{c,r}^{i-1} g_c^i > \ldots > \omega_{c,r}^{i} g_c^i }_{\text{Lower Rank Decoding Order: }\mathcal{O}_i^{\ell}}}^{\text{Uncancellable Signals}}
\end{equation}
where normalization is with respect to BS transmission power $P_c$, $\mathcal{O}_i^h=\{i+1, \ldots, K_c^r \}$ is the higher rank decoding order set, and $\mathcal{O}_i^{\ell}=\{1, \ldots, i\}$ is the lower rank decoding order set for UE$_i$. UE$_i$ can only cancel the interference induced by higher rank members, while interference from lower rank members cannot be decoded as they are weaker than the desired signal. Furthermore, the hardware sensitivity of the SIC receivers requires a minimum signal-to-interference-plus-noise-ratio (SINR) to distinguish intended signals from noise. Therefore, \textit{power disparity \& sensitivity constraints }(PDSCs) can be expressed in linear scale as \cite{Ali2016} 
\begin{equation}
\label{eq:PDC}
P_c\omega_{c,r}^{j} g_c^i - \sum_{k=i}^{j-1} P_c \omega_{c,r}^{k} g_c^i  \geq p_{\Delta},\textbf{  }\forall \text{ UE}_i,  \forall \text{ UE}_j  \in \mathcal{O}_i^h
\end{equation}
where $p_{\Delta}$ denotes the hardware sensitivity. The intuition behind (\ref{eq:PDC}) is that during the IC process of UE$_j \in\mathcal{O}_i^h$, receiver of $\textrm{UE}_i$ observes undecoded signals of UE$_k \in\mathcal{O}_i^h$, $j>k$, as noise. Moreover, a non-ideal SIC observes some residual interference due to the error propagation which is caused by detection and estimation errors \cite{SIC_OFDMA}. Accordingly, a generic SINR representation of the imperfect SIC receiver is given  as 
\begin{align}
\label{eq:sinr_c_dl}
\gamma_{c,r}^{i}& =\frac{\omega_{c,r}^{i} \alpha_{c,r}^i}{  \sum_{ \substack{ l \in \mathcal{O}_{i}^{\ell} \\ l \in \mathcal{K}_c^r}} \omega_{c,r}^{l} \alpha_{c,r}^l+ \epsilon_i \sum_{ \substack{ j \in \mathcal{O}_{i}^{h} \\j \in \mathcal{K}_c^r}} \omega_{c,r}^{j} \alpha_{c,r}^j
+\rho_{c,r}^i}
\end{align} 
where the first two terms of the denominator characterize the ICRI, $0 \leq \epsilon_i \leq 1$ is the fractional error factor (i.e., $1-\epsilon_i$ can be regarded as the SIC efficiency) which determines the residual interference after the IC, $\rho_{c,r}^i \triangleq \frac{I_{ici}+N_0 B \theta_c^r} {P_c g_c^i}$, $I_{ici}$ is ICI generated by UEs located at different macrocell coverage area and cannot be canceled by the SIC receiver \footnote{Notice that a dominant ICI can decrease the SIC efficiency (i.e., increase $\epsilon$) due to the failure in the decoding process.}, $B$ is the entire DL bandwidth, and $N_0$ is the thermal noise power spectral density. We must note that the  ICI can have a significantly negative impact on the NOMA performance in Het-Nets. Indeed, it can be handled by adopting the existing interference management schemes of traditional OMA networks, which can be classified as offline (e.g., fractional frequency reuse, sectorization or spatial reuse) and online (silencing, dynamic spectrum access, cooperative beamforming, cooperative multi-point transmission,etc.) techniques \cite{Ali2017, Liu2017}. Although the remainder of the paper factors the ICI in the optimal power allocation scheme, our contributions do not include developing ICI management schemes. Hence, the capacity achievable by UE$_i$ is given by 
\begin{equation}
\label{eq:capacity}
\mathbb{C}_{c,r}^i = B \sum_{c,r}  \theta_c^r  \log_2(1+ \gamma_{c,r}^i).
\end{equation}
Even if this work assumes a perfect CSI estimation, uncertainty about CSI is crucial from two aspects: First, power allocations obtained based on imperfect CIS can considerably degrade the NOMA performance due to the increasing impact of intra-cluster interference. Second, although small and fast fading component may not have a significant effect on the channel gain order, large-scale fading should still be estimated with sufficient accuracy to reduce the impacts of CSI estimation errors on the clustering strategy\footnote{Even though it is out of this paper's scope, extending the proposed methods to a robust optimization framework is necessary to account for CSI estimation errors.}.

\section{Cluster Formation and Power Bandwidth Allocation}
\label{sec:CF-PBA}
In a HetNet, determining optimal values for integer-valued cluster numbers/sizes and binary valued UE-cluster associations yields high computational complexity. Furthermore, highly non-convex nature of PBA problem induces an extra complexity to achieve optimal performance. A generic CF policy may consider all UEs as potential candidates for all clusters, which has a high time complexity in the order of $\mathrm{O}\left(2^{U \times R}\right)$ for an exhaustive CF solution. Since clusters can be formed among UEs associated with different BSs, such an approach also necessitates a low-latency and robust coordination among the BSs in order to decide on clusters and optimal transmission power levels, which requires the exchange of channel gains and decoded signals to perform SIC locally and naturally yields a high communication overhead.  However, we consider a more practical scenario where BSs independently form clusters among their UEs such that the order of time complexity can be reduced to $\mathrm{O}\left(2^{U_c \times R_c}\right), \forall c $.  In this way, a distributed CF and PBA approach can be developed as each BS is responsible for CF and PBA of its own UEs. In what follows, we first formulate a centralized joint CF and PBA problem and then present the proposed distributed solution.

\subsection{Centralized CF and PBA}
 The centralized CF and PBA problem can be formulated as in $ \vect{\mathrm{P}_\mathrm{o}}$ where we make the following assumptions: 1) A UE can be associated with exactly one cluster at a time and  2) The cluster size is determined by ensuring that the highest channel gain UE can cancel all other cluster members. In particular, the first assumption enables us to decompose the entire problem into sub-problems and develop a hierarchically distributed PBA scheme. On the other hand, BS$_c$ locally allocates a certain power fraction of its available DL transmission power to its own clusters, $\varpi_c^r, \forall c, r$.

\begin{figure}[htbp!]
\begin{equation*}
\hspace*{0pt}
 \begin{aligned}
 & \hspace*{0pt} \vect{\mathrm{P}_\mathrm{o}}: \underset{\vect{\alpha},\vect{\theta},\vect{\omega}, \vect{\varpi}}{\max}
& & \hspace*{3 pt} B \sum_{c} \sum_{r \in \mathcal{R}_c}  \theta_c^r  \sum_{i \in \mathcal{K}_c^r}\log_2(1+ \gamma_{c,r}^i) \\
& \hspace*{0pt} \mbox{$\mathrm{C}_o^1$: }\hspace*{8pt} \text{s.t.}
&&  \sum_{r} \alpha_{c,r}^i = 1, \textbf{ } \hspace{56 pt} \forall c,i\\ 
 &
 \hspace*{0 pt}\mbox{$\mathrm{C}_o^2$: } & & \sum_{ i } \alpha_{c,r}^i\leq K_c, \textbf{ } \hspace{50 pt} \forall c,r \\
    &
\hspace*{0 pt}\mbox{$\mathrm{C}_o^3$: } & & \sum_{i} \omega_{c,r}^i \leq \varpi_c^r,\textbf{ } \hspace{48pt} \forall c, \forall r \\
    &
    \hspace*{0 pt}\mbox{$\mathrm{C}_o^4$: } & & \sum_{r} \varpi_c^r \leq 1,\textbf{ } \hspace{60pt} \forall c\\
    &
    \hspace*{0 pt}\mbox{$\mathrm{C}_o^5$: } & & \sum_{r} \theta_c^r \leq 1,\\
&
  \hspace*{0 pt}\mbox{$\mathrm{C}_o^6$: } & & 0 \leq \omega_{c,r}^i \alpha_{c,r}^i - q_{c,r}^i \left(\sum_{j=1}^{i-1}\omega_{c,r}^j\alpha_{c,r}^j \right.\\
  &&& \left. +\epsilon_i \sum_{k=i+1}^{K_c^r} \omega_{c,r}^k \alpha_{c,r}^j+ \rfrac{\beta_c^r}{\rho_{c,r}^i}\right),  \textbf{ } \hspace{0 pt}\forall c,\forall r,\forall  i  \\
&
  \hspace*{0 pt}\mbox{$\mathrm{C}_o^7$: } & & 0 \leq \omega_{c,r}^i \alpha_{c,r}^i - \sum_{j=1}^{i-1}\omega_{c,r}^j \alpha_{c,r}^j-\Delta_c^i ,  \textbf{ } \hspace{2 pt}\forall i>1 \\
  &
\hspace*{0 pt}\mbox{$\mathrm{C}_o^{8}$: } & &  \omega_{c,r}^i, \varpi_c^r, \theta_c \in [0,1], \: \alpha_{c,r}^i \in \{ 0,1\},
\end{aligned}
\end{equation*}
\end{figure}
In $\vect{\mathrm{P}_\mathrm{o}}$, $\mathrm{C}_o^1$ ensures that a UE is assigned to only one cluster and $\mathrm{C}_o^2$ puts an upper bound on the number of UEs within a cluster by $K_c$ which is determined based on the affordable number of ICs by UEs that is a design parameter and denoted by $L_c$. While $\mathrm{C}_o^3$ limits the power consumption of $\mathcal{K}_c^r$ to the fraction $\varpi_c^r$, $\mathrm{C}_o^4$ restricts the total amount of clusters' power weights to available BS$_c$ power. Likewise, $\mathrm{C}_o^5$ constrains the sum of cluster bandwidths to the available total bandwidth, $B$. $\mathrm{C}_o^6$ introduces the QoS requirements where $q_{c,r}^i\triangleq2^{\rfrac{\overline{\mathbb{C}}_{c,r}^i}{B\theta_c^r}}$ and $\overline{\mathbb{C}}_{c,r}^i$ is the data rate demand of UE$_i$\footnote{ $q_{c,r}^i$ is obtained from \ref{eq:capacity} which reduces a single term due to the $\mathrm{C}_o^1$.}. Since clusters are assumed to have dedicated spectrum, cluster index $r$ also represents the subcarrier indexes. $\mathrm{C}_o^6$ present PDSCs where $\Delta_c^i \triangleq \rfrac{p_{\Delta}}{P_c g_c^i}$. Finally, variable domains are defined in $\mathrm{C}_o^{8}$ where the power allocation for UE$_u$ on cluster $r$ is set to zero ($0 \leq\omega_{c,r}^u \leq  \alpha_{c,r}^u$) if UE$_u \notin \mathcal{K}_c^r$.
 
 $\vect{\mathrm{P}_\mathrm{o}}$ is apparently a MINLP problem which requires impractical time complexity even for moderate sizes of HetNets. As a fast yet high performance suboptimal solution methodology is of the essence to employ NOMA in practice, we  develop a solution methodology by decoupling this hard problem into CF and PBA subproblems, which is desirable to obtain a tractable solution methodology and also durable as it is already shown that the NOMA performance gain is primarily determined by channel gain disparity of the cluster members \cite{Ding2015TVT}.

\subsection{Distributed CF and PBA}
 
Let us first classify the type of network resources based on the following hierarchical relationship: The bandwidth proportion $\theta_c^r$ is a \textit{primary-global} resource in which BSs compete for their clusters. Therefore, $\theta_c^r$ and $\sum_{c,r} \theta_c^r \leq 1$ are primary-global coupling variable and complicating constraint among the BSs, respectively. Likewise, as each BS has its own power source, the fraction of total transmission power $\varpi_c^r$ is a \textit{secondary-global} resource in which clusters contend for their members. Hence, $\varpi_c^r$ and $\sum_r \varpi_c^r \leq 1$ are secondary-global coupling variable and complicating constraint for BS$_c$ clusters, respectively. On the other hand, $\omega_{c,r}^i$ is a \textit{local} resource needed by cluster members UE$_i \in \mathcal{K}_c^r$ to satisfy PDSCs and QoS constraints. Thus, $\omega_{c,r}^i$ and $\sum_{i \in \mathcal{K}_c^r} \omega_{c,r}^i \leq \varpi_c^r$ are secondary-global coupling variable and complicating constraint for cluster members, respectively.

 \begin{figure}[t!]
    \centering
        \includegraphics[width=0.35 \textwidth]{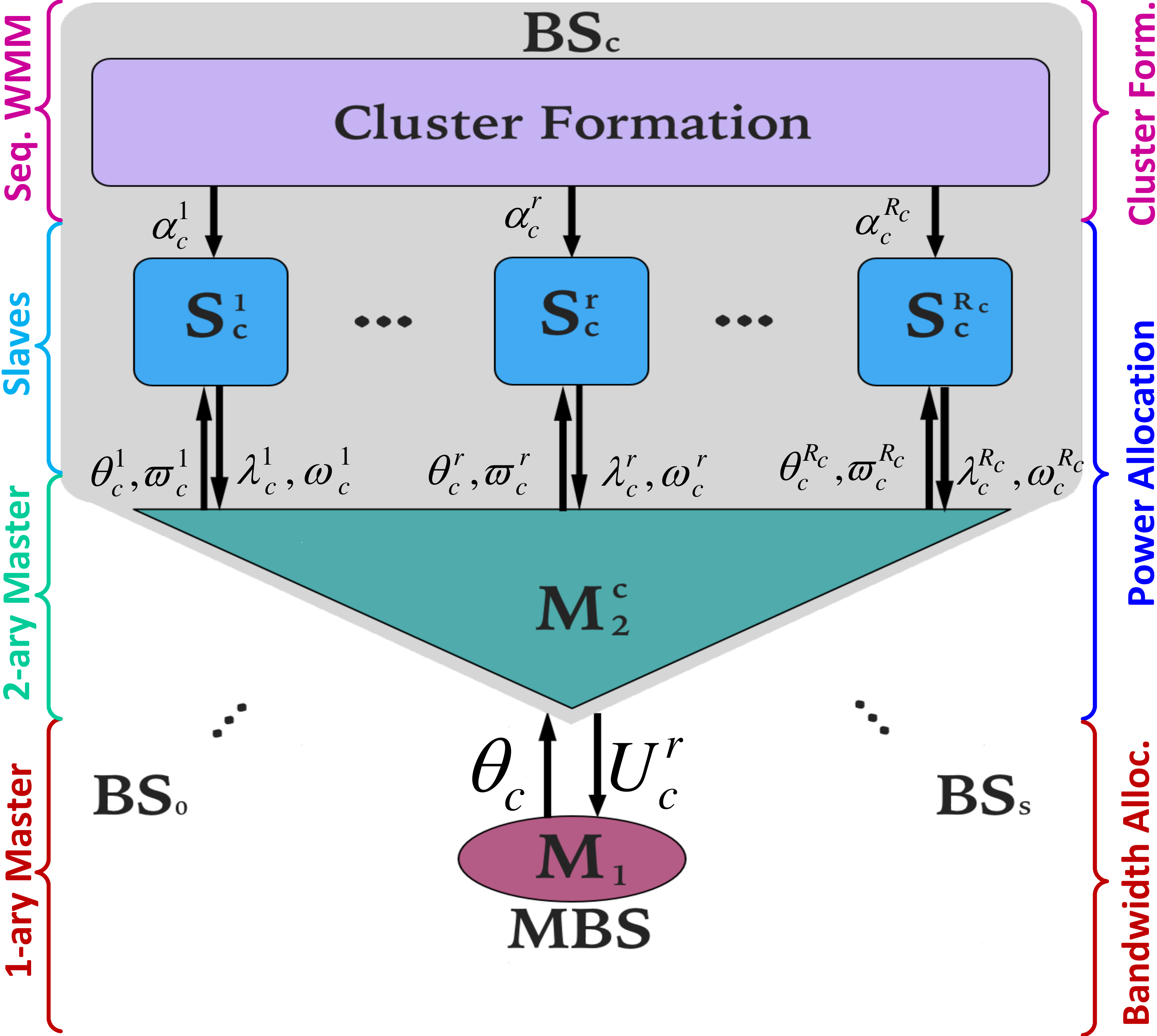}
        \caption{Schematic illustration of the proposed hierarchically distributed solution.}
        \label{fig:distributed}
\end{figure} 

Exploiting this hierarchical relation and decomposability of the problem, CF and PBA problem can be solved in a distributed manner as shown in Fig. \ref{fig:distributed} where each BS can form its own clusters by decoupling binary clustering variable $\vect{\alpha}$ from power-bandwidth variables because the achievable performance gain of NOMA has shown to be mainly determined by the channel gain disparity of the cluster members \cite{Ding2015TVT}. Thereafter, decomposition methods is applied to given cluster formations in order to conclude the optimal power and bandwidth portions in a distributed manner as follows: A central unit (preferably the MBS) is responsible for \textit{primary-master} problem $\vect{\mathrm{M}_1}$ which decides on bandwidth portions $\vect{\theta}$ as per the utility achieved by each cluster. Depending upon the bandwidth allocated by $\vect{\mathrm{M}_1}$, $\vect{\theta_c}$, BS$_c$ is responsible for its own \textit{secondary-master} problem $\vect{\mathrm{M}_2^c}$ in order to determine the power fraction $\vect{\varpi_c}$ for clusters of BS$_c$, $\forall c$. Finally, $\vect{\mathrm{M}_2^c}$ is further divided into $R_c$ \textit{slave} problems which maximize the total cluster utility for a certain bandwidth $\theta_c^r$ and power fraction $\varpi_c^r$ given by primary and secondary masters, respectively. 

\section{Weighted Multi-Partite Matching Based CF}
\label{sec:CF}
In our proposed solution, BS$_c$ independently forms $R_c$ clusters among UEs $ \in \mathcal{U}_c$ in a distributed fashion. Therefore, let us omit the cluster indices and consider the CF of a single BS without loss of generality. Denoting the affordable number of ICs as $L_c$, the maximum size of clusters is limited by $K_c=L_c+1$ as the highest channel gain UE within a cluster can cancel interference of at most $L_c$ UEs. From multi-user detection theory it is known that the capacity region of NOMA improves with the channel gain disparity of users. Since the performance gap between NOMA and OMA schemes are primarily determined by the channel gain disparity of cluster members \cite{Ding2015TVT}, the proposed clustering method tries to maximize overall channel gain disparity.
 
Accordingly, we develop a \textit{matching theory} based clustering algorithm by partitioning the UE index set of BS$_c$, $\mathcal{U}_c=\{i \: \vert \: g_c^i > g_c^{i+1}, 1\leq i \leq U_c \}$, into $L_c+1$ disjoint channel gain levels, i.e., 
\begin{align}
\label{eq:partit}
\mathcal{P}_c^\ell = \{ i \: \vert \: i \in \mathcal{U}_c; \: (\ell-1) R_c +1 \leq i \leq \min(\ell R_c, U_c) \},
\end{align}
where $1 \leq \ell \leq K_c $, $R_c= \lceil U_c/L_c\rceil$ is the number of clusters/sub-carriers of $\mathcal{U}_c$ as a function of the cluster size and $\mathcal{P}_c^{\ell'}  \cap \mathcal{P}_c^{\ell} = \emptyset, \: \forall \ell'\neq \ell$. That is, intra-partition and inter-partition channel gains are in descending order (i.e., $g_c^i \geq g_c^j, \: i,j \in \mathcal{P}_c^\ell, \: \forall i < j $), and the lowest channel gain within $\mathcal{P}_c^\ell$ is higher than all channel gains within  $\mathcal{P}_c^{\ell'}, \: \forall \ell' > \ell$. Notice that the partitioning in \eqref{eq:partit} requires sort of channel gains and has therefore a time complexity of $\mathrm{O}(U_c \log U_c)$. Pictorially, one can also think of partitions as UEs falling into non-overlapping spatial ring zones in a free-space path loss channel model as shown in Fig. \ref{fig:sys}. 

\begin{figure}[!t]
    \centering
        \includegraphics[width=0.45\textwidth]{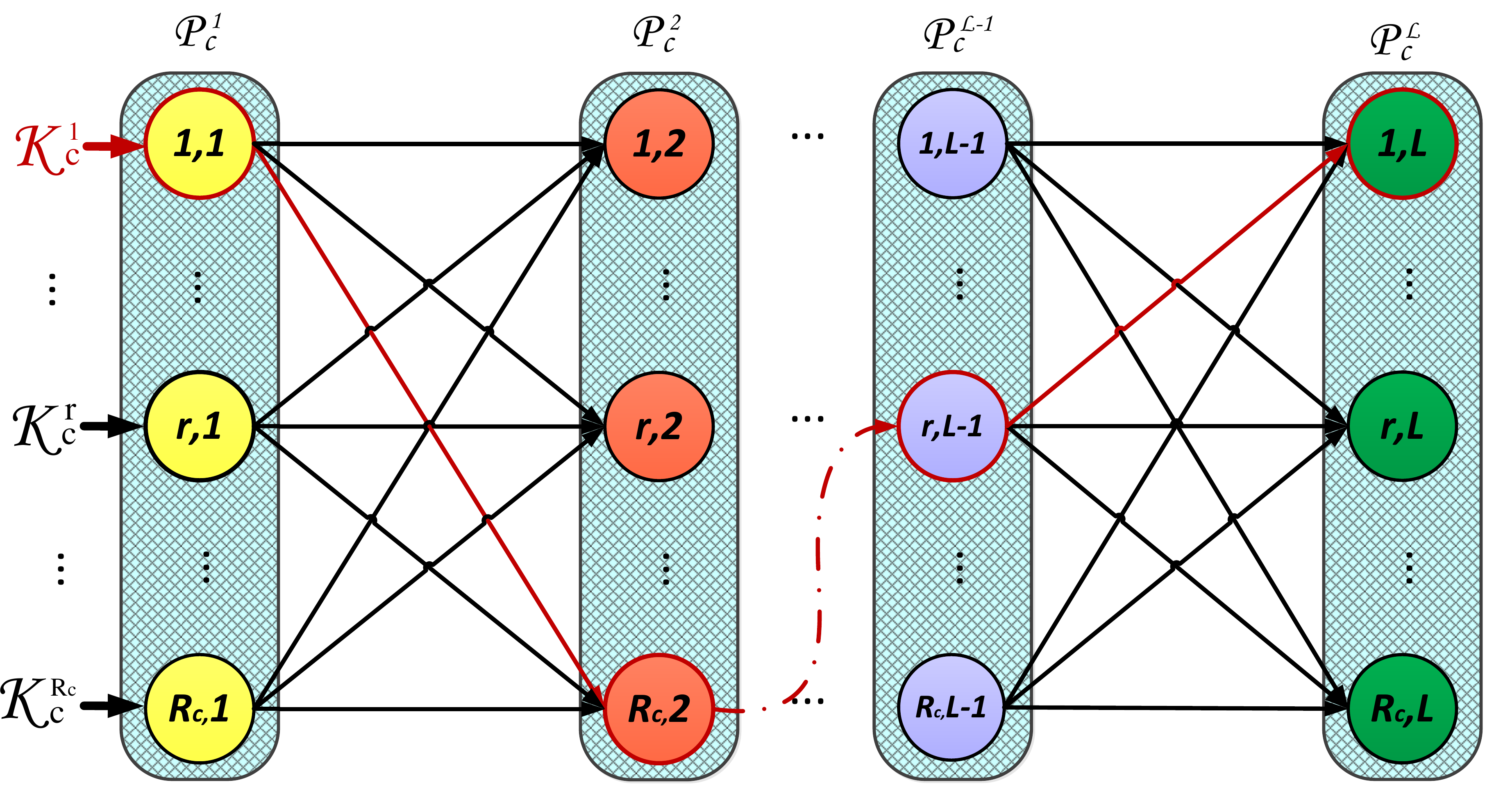}
        \caption{Weighted multi-partite matching representation for CF.}
        \label{fig:mpartite}
\end{figure}

The partitioning yields a multi-partite graph (MPG) as depicted in Fig. \ref{fig:mpartite} where edge weight from $i^{th}$ element of $\mathcal{P}_c^s$ to $j^{th}$ element of $\mathcal{P}_c^{s+1}$ is denoted as $\mathcal{E}_{c,s}^{i,j}$. Note that solving weighted multi-partite matching (WMM) is still a hard problem. Hence,  we propose a sequential WMM scheme where sequence $s$ is modeled as a weighted bi-partite matching (WBM) between partitions $\mathcal{P}_c^{s}$ and $\mathcal{P}_c^{s+1}$, $1 \leq s\leq L_c$.  In Fig. \ref{fig:mpartite}, for example, the sequential formation of first cluster $\mathcal{K}_c^{1}$ is highlighted in red colors. Note that proposed  WMM is general enough to apply different clustering purposes by just changing the edge weight design. Each matching sequence of WMM is in the form of rectangular assignment problem (RAP) which is generally solved by Munkres Algorithm with cubic time complexity \cite{Jonker1987}. Accordingly, overall complexity of the proposed sequential WMM based CF is given as $\mathrm{O}(U_c \log U_c + L_c R_c^3)$ whereas exhaustive solution of the original problem has an exponential time complexity, i.e., $\mathrm{O}\left(2^{U_c \times R_c} \right)$.

For an exemplary edge weight design, let us consider edge weight matrix  $H_c \in \mathbb{R}^{R_c \times K_c}$ whose elements are channel gain of corresponding users in the MPG. For example, if $i^{th}$ user of $j^{th}$ partition is UE$_k$ then $H_{i,j}=g_c^k$. Accordingly, we design the edge weight from $i^{th}$ element of $\mathcal{P}_c^s$ to $j^{th}$ element of $\mathcal{P}_c^{s+1}$ as follows
\begin{equation}
\label{eq:weight}
\mathcal{E}_{c,s}^{i,j}= \frac{H_{s,i}}{H_{s+1,j}}
\end{equation}
which makes each WBM sequence favors for new cluster members that maximize the sum of channel gain disparity. Furthermore, Lemma \ref{lem:CF} shows that the previous generic time complexity, $\mathrm{O}(U_c \log U_c + L_c R_c^3)$, can even be reduced to simple sorting and indexing operations of partitioning in \eqref{eq:partit}, i.e., $\mathrm{O}(U_c \log U_c)$. 
\begin{mylemma}[]
\label{lem:CF}
When edge weights of the MPG are merely determined by channel gains as in \eqref{eq:weight}, sequential WMM always forms cluster $\mathcal{K}_c^r$ by $r^{th}$ user of each partition.
\end{mylemma}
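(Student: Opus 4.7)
The plan is to prove Lemma~\ref{lem:CF} by extremizing each single WBM sequence in closed form via the rearrangement inequality and then chaining the per-sequence outcomes through a short induction on the sequence index $s$. The key structural fact, coming directly from \eqref{eq:partit}, is that inside every $\mathcal{P}_c^\ell$ the users are indexed in strictly decreasing channel-gain order, so both $(H_{s,i})_{i=1}^{R_c}$ and $(H_{s+1,j})_{j=1}^{R_c}$ are strictly decreasing. At sequence $s$, the WBM extremizes $W_s(\sigma_s)=\sum_{i=1}^{R_c}\mathcal{E}_{c,s}^{i,\sigma_s(i)}=\sum_{i=1}^{R_c}H_{s,i}/H_{s+1,\sigma_s(i)}$, which I would rewrite as $\sum_i a_i c_{\sigma_s(i)}$ with $a_i=H_{s,i}$ (decreasing in $i$) and $c_j=1/H_{s+1,j}$ (increasing in $j$).

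The central step is the rearrangement inequality applied to $(a_i)$ and $(c_j)$, which run in opposite monotonic orders: this pins down $\sigma_s=\mathrm{id}$ as the extremum that couples the $i$-th element of $\mathcal{P}_c^s$ with the $i$-th element of $\mathcal{P}_c^{s+1}$. A direct way to certify the correct orientation is a pairwise-swap check: if $\sigma^{(i)}$ is the transposition that exchanges ranks $i$ and $i+1$, a short calculation gives $W_s(\mathrm{id})-W_s(\sigma^{(i)})=(H_{s,i}-H_{s,i+1})(1/H_{s+1,i}-1/H_{s+1,i+1})$, whose sign is fixed by the descending orderings (a positive factor times a negative one). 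Hence the identity beats every adjacent transposition in the direction relevant to the WBM and is therefore the global extremum.

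A one-line induction on $s=1,\dots,L_c$ then finishes the argument: cluster $r$ is seeded with the $r$-th element of $\mathcal{P}_c^1$, and if after sequence $s-1$ it already contains the $r$-th element of $\mathcal{P}_c^s$, then the identity matching in sequence $s$ appends the $r$-th element of $\mathcal{P}_c^{s+1}$. Hence $\mathcal{K}_c^r$ consists of the $r$-th user of each of the $K_c$ partitions, as claimed. As a byproduct, no Munkres call is needed at runtime: once the UEs are sorted and partitioned as in \eqref{eq:partit}, the clusters can be read off directly by rank, which is why the overall complexity collapses from $\mathrm{O}(U_c\log U_c+L_cR_c^3)$ to the $\mathrm{O}(U_c\log U_c)$ cost of the initial sort.

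The hard part will be orienting the rearrangement step correctly. Because the two sequences I feed into it are in opposite monotonic orders, a careless application flips the extremum and returns the reverse permutation instead of the identity; the transposition calculation above (or an equivalent monotone-coupling argument) is therefore the linchpin of the whole proof, after which the induction and the algorithmic simplification follow immediately.
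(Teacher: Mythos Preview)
Your approach via the rearrangement inequality is essentially the same as the paper's, only more explicit: the paper simply asserts monotonicity of the edge-weight matrix and that Munkres therefore returns the identity, while you make the extremization argument precise through an adjacent-transposition computation and then induct on $s$.

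That said, there is a genuine gap, and you have already put your finger on it without closing it. Your own calculation gives
\[
W_s(\mathrm{id})-W_s(\sigma^{(i)})=(H_{s,i}-H_{s,i+1})\Bigl(\tfrac{1}{H_{s+1,i}}-\tfrac{1}{H_{s+1,i+1}}\Bigr)<0,
\]
since the first factor is positive and the second is negative under the descending orderings. So the identity is the \emph{minimizer} of $W_s$, not the maximizer; the rearrangement inequality with $a_i$ decreasing and $c_j$ increasing says the maximum is attained by the \emph{reverse} permutation. The paper's surrounding text, however, says each WBM ``maximize[s] the sum of channel gain disparity,'' and its own proof states the row/column monotonicity of $\mathcal{E}_{c,s}^{i,j}$ with the inequalities in the wrong direction. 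Your hedge ``in the direction relevant to the WBM'' papers over exactly this inconsistency.

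To make your argument complete you must pin down the objective actually solved at each sequence. If the WBM is run as a standard rectangular assignment (Munkres minimization), then your transposition computation shows $\sigma_s=\mathrm{id}$ is optimal and the induction goes through verbatim. If it is truly a maximization of $\sum_i H_{s,i}/H_{s+1,\sigma(i)}$ as the text states, then the optimizer is the reversal and the lemma's conclusion fails. Either way, state the orientation explicitly and carry the sign through; the rest of your proof (induction on $s$ and the $\mathrm{O}(U_c\log U_c)$ consequence) is fine.
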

\begin{proof}
We first remind that the intra-partition and inter-partition channel gains are in descending order (i.e., $g_c^i \geq g_c^j, \: i,j \in \mathcal{P}_c^s, \: \forall i < j $), and the lowest channel gain within $\mathcal{P}_c^s$ is higher than all channel gains within  $\mathcal{P}_c^{s'}, \: \forall s' > s$. At WBM sequence $s$, therefore, column and row entries of edge weight matrix are monotonically increasing and decreasing, respectively.  That is, $\mathcal{E}_{c, s}^{i,j} \geq \mathcal{E}_{c, s}^{k,j}$ if $i > k$ and  $\mathcal{E}_{c, \ell}^{i,j} > \mathcal{E}_{c, \ell}^{i,k}$  if $j < k$. Accordingly, Munkres Algorithm always matches $i^{th}$ element of partition $s$ with the $i^{th}$ element of partition $s+1$, as it yields the highest weight. Therefore, sequential WMM always forms cluster $i$ by $i^{th}$ member of each partition.  
\end{proof}
\begin{figure*}[b]
\hrule
\begin{align}
\label{eq:Lag}
\nonumber    \mathcal{L}\left(\vect{\omega}, \lambda, \vect{\varphi} ,\vect{\mu}\right) &=\theta B \sum_{i \in \mathcal{K}} \log_2 \left( 1+ \frac{\omega_{i}}{ \sum_{j=1}^{i-1}{\omega_j} +\epsilon_i \sum_{k=i+1}^{K}  \omega_k+ \rho_i} \right) + \lambda \left( \sum_{i \in \mathcal{K}}\omega_i- \varpi \right)  \\
&+ \sum_{i =2}^{K} \varphi_i \left(\omega_i- \sum_{j=1}^{i-1}\omega_j-\Delta_i \right) + \sum_{i =1}^{K} \mu_i \left[ \omega_i- \left(\sum_{j=1}^{i-1} \omega_j  +\epsilon_i \sum_{k=i+1}^{K} \omega_k + \rho_i \right)(q_i-1) \right), 
\end{align}
\begin{align}
\label{eq:Lag_derv_omg2}
\nonumber   \frac{\partial \mathcal{L}}{\partial \omega_i^{\star}}  &=\frac{\theta B  }{\sum_{j=1}^{i}{\omega_j} +\epsilon_i \sum_{k=i+1}^{K}  \omega_k+ \rho_i} - \sum_{m=1}^{i-1} \frac{ \theta B \epsilon_m  \omega_m}{ \left(\sum_{j=1}^{m-1}{\omega_j} +\epsilon_m \sum_{k=m+1}^{K}  \omega_m+ \rho_m \right) \left(\sum_{j=1}^{m}{\omega_j} +\epsilon_m \sum_{k=m+1}^{K}  \omega_k+ \rho_m \right)} \\ 
\nonumber & - \sum_{n=i+1}^{K} \frac{ \theta B \omega_n}{ \left(\sum_{j=1}^{n-1}{\omega_j} +\epsilon_n \sum_{k=i+1}^{K_c^r}  \omega_k+ \rho_c^n \right) \left(\sum_{j=1}^{n}{\omega_j} +\epsilon_n \sum_{k=n+1}^{K_c^r}  \omega_k+ \rho_n \right)} \\
& - \lambda  +\varphi_i  - \sum_{j=i+1}^{K} \varphi_j  +  \mu_i  - \sum_{j =1}^{i-1} \mu_j \epsilon_j \left( q_j -1\right)  - \sum_{k =i+1}^{K}  \mu_k \left( q_k -1\right)  \leq 0,  \omega_{i}^{\star} \geq 0, \\
 \label{eq:Lag_derv_lam}
   \frac{\partial \mathcal{L}}{\partial \lambda^\star} &= \varpi - \sum_{i \in \mathcal{K}_c^r}\omega_i  \geq 0, \text{ if } \lambda^\star  \geq 0, \hspace{5pt} \frac{\partial \mathcal{L}}{\partial \varphi_{i}^{\star}}=\omega_i - \sum_{j=1}^{i-1}\omega_j-\Delta_i \geq 0, \text{ if } \varphi_{i}^{\star} \geq 0,\: i\geq 2, \\
 \label{eq:Lag_derv_mu}
 \frac{\partial \mathcal{L}}{\partial \mu_{i}^{\star}} &=  \omega_i- \left(\sum_{j=1}^{i-1} \omega_j +\epsilon_i \sum_{k=i+1}^{K_c^r} \omega_k + \rho_i \right)(q_i-1) \geq 0,\text{ if } \mu_{i}^{\star} \geq 0.
\end{align}
\end{figure*}

\section{Hierarchically Distributed PBA}
\label{sec:PBA}

In this section, we develop a hierarchical decomposition method in order to obtain a distributed power and bandwidth allocation technique for DL-HetNets. Decomposability of network utility maximization problems provides us with the most appropriate distributed solution methodology and can modularize control and resource allocation in HetNets as it consists of hierarchical network entities \cite{decomp}, i.e., MBSs, SBS, and clusters in our case. Primal decomposition is naturally applicable to master problems where “virtualization” or “slicing” of the resources are carried out by dividing the total resource for each of the cluster competing for the resource \cite{boyd2007notes}. Alternatively, we employ dual decomposition method for the slave problems to obtain optimal primal and dual variables in closed-form which is used by master problems to update power and bandwidth allocation of clusters. 

\subsection{Closed-Form Solutions for Slave Problems}
The slave problems can be formulated as in $\vect{S}$ where we omit the BS and cluster indices for the sake of simplicity. Remarking that $\vect{S}$ depends on bandwidth portion and power fractions given by master problems, its constraints $\mathrm{S}_1$, $\mathrm{S}_2$, and $\mathrm{S}_3$ correspond to $\vect{P}_o$ constraints $\mathrm{C}_{o}^3$, $\mathrm{C}_{o}^6$, and $\mathrm{C}_{o}^7$, respectively. 
\begin{equation*}
\hspace*{0pt}
 \begin{aligned}
 & \hspace*{0pt} \vect{\mathrm{S}}: \underset{\vect{\omega}}{\max}
& & \hspace*{3 pt}  \sum_{i \in \mathcal{K}} \log_2 \left( 1+\gamma_{}^i \right) \\
& \hspace*{0pt} \mbox{$\mathrm{S}_1$: }\hspace*{8pt} \text{s.t.}
&&  \sum_{i \in \mathcal{K}}\omega_i \leq \varpi, \\
 &
  \hspace*{0 pt}\mbox{$\mathrm{S}_2^i$: } & & 0 \leq \omega_i- (q_i-1) \times \\
  && &\left(\sum_{j=1}^{i-1} \omega_j+\epsilon_i \sum_{k=i+1}^{K} \omega_k + \rho_i \right) ,  \textbf{ } \hspace{0 pt}\forall i \in \mathcal{K}\\
&
  \hspace*{0 pt}\mbox{$\mathrm{S}_3^i$: } & & 0 \leq \omega_i- \sum_{j=1}^{i-1}\omega_j-\Delta_i ,  \textbf{ } \hspace{15 pt} \forall i \in \mathcal{K}, i \geq 2
\end{aligned}
\end{equation*}

In order to derive the optimal closed-form expressions for primal and dual variables, we first apply dual decomposition method to the slave problems. Accordingly, Lagrangian function of $\vect{\mathrm{S}}$ can be obtained as in (\ref{eq:Lag}) where $\lambda$, $\varphi_i$, and $\mu_i$ are Lagrange multipliers. Taking derivatives of Lagrangian function w.r.t. $\omega_i$, $\lambda$, $\mu_i$, and $\varphi_i$, Karush-Kuhn-Tucker (KKT)  conditions can be obtained as in (\ref{eq:Lag_derv_omg2})-(\ref{eq:Lag_derv_mu}). Given that some regularity conditions are satisfied, KKT conditions are first-order necessary conditions for a solution in nonlinear programming to be optimal. If linearity constraint qualification is held, i.e., all equality and inequality constraints are affine functions, no other regularity condition is required. This is indeed the case for the slave problems since all constraints are affine functions of power weights. 
\renewcommand{\arraystretch}{0.7}
\begin{table*}[htbp!]
\footnotesize
\centering
\caption{Optimal power allocations along with the corresponding necessary conditions.}
\label{tab:pwr_cond}
\newcolumntype{C}{>{\centering\arraybackslash} m{1cm} }
\begin{tabular}{|l|l|l|}
\hline
\textbf{K  }                                      & \multicolumn{1}{c|}{\textbf{Optimal Power Allocation}} & \multicolumn{1}{c|}{\textbf{Necessary Conditions}} \\ \hline
\multicolumn{1}{|c|}{\multirow{2}{*}{\textbf{2}}} &         \parbox{3cm}{ \begin{align*} \omega_1 &=\frac{\varpi}{q_2}-\frac{\rho_2 (q_2-1)}{q_2}+\psi_1(\vect{\epsilon}) \\ 
\omega_2 &=\frac{\varpi(q_2-1)}{q_2}+\frac{\rho_2(q_2-1)}{q_2} +\psi_2(\vect{\epsilon})
\end{align*}}
                &             \parbox{3cm}{ \begin{align*} & \omega_i- \left(\sum_{j=1}^{i-1} \omega_j +\epsilon_i \sum_{k=i+1}^{K} \omega_k + \rho_i \right)(q_i-1) > 0, \forall i=1.\\
                & \omega_i - \sum_{j=1}^{i-1}\omega_j-\Delta_i > 0, \forall i=2.
\end{align*}}         \\ \cline{2-3} 
\multicolumn{1}{|c|}{}          &           \parbox{3cm}{ \begin{align*} \omega_1 &=\frac{\varpi}{2}-\frac{\Delta_2}{2}+\psi_1(\vect{\epsilon}) \\ 
\omega_2 &=\frac{\varpi}{2}+\frac{\Delta_2}{2} +\psi_2(\vect{\epsilon}) 
\end{align*}}              &              \parbox{3cm}{ \begin{align*} & \omega_i- \left(\sum_{j=1}^{i-1} \omega_j +\epsilon_i \sum_{k=i+1}^{K} \omega_k + \rho_i \right)(q_i-1) > 0, \forall i=1,2.
\end{align*}}             \\ \hline
\multirow{4}{*}{ \textbf{3}}                       &             \parbox{3cm}{ \begin{align*} \omega_1 &=\frac{\varpi}{q_2 q_3}-\frac{\rho_2(q_2-1)}{q_2}-\frac{\rho_3(q_3-1)}{q_2 q_3}+\psi_1(\vect{\epsilon}) \\ \omega_2 &=\frac{\varpi (q_2-1)}{q_2 q_3}+\frac{\rho_2 (q_2-1)}{q_2}-\frac{\rho_3 (q_2-1)(q_3-1)}{q_2 q_3} +\psi_2(\vect{\epsilon}) \\
 \omega_3 &= \frac{\varpi (q_3-1)}{q_3} + \frac{\rho_3(q_3-1)}{q_3}+\psi_3(\vect{\epsilon})
\end{align*}}                &              \parbox{3cm}{ \begin{align*} & \omega_i- \left(\sum_{j=1}^{i-1} \omega_j +\epsilon_i \sum_{k=i+1}^{K} \omega_k + \rho_i \right)(q_i-1) > 0, \forall i=1.\\
                & \omega_i - \sum_{j=1}^{i-1}\omega_j-\Delta_i > 0, \forall i=2,3.
\end{align*}}              \\ \cline{2-3} 
                                         &            \parbox{3cm}{ \begin{align*} \omega_1 &=\frac{\varpi}{2 q_2}-\frac{\rho_2(q_2-1)}{q_2}-\frac{\Delta_3}{2 q_2}+\psi_1(\vect{\epsilon}) \\ 
                                         \omega_2 &=\frac{\varpi (q_2-1)}{ 2 q_2}+\frac{\rho_2 (q_2-1)}{q_2}-\frac{\Delta_3 (q_2-1)}{2 q_2} +\psi_2(\vect{\epsilon}) \\
 \omega_3 &= \frac{\varpi}{2} + \frac{\Delta_3}{2}+\psi_3(\vect{\epsilon})
\end{align*}}                  &          \parbox{3cm}{ \begin{align*} & \omega_i- \left(\sum_{j=1}^{i-1} \omega_j +\epsilon_i \sum_{k=i+1}^{K} \omega_k + \rho_i \right)(q_i-1) > 0, \forall i=1,3.\\
                & \omega_i - \sum_{j=1}^{i-1}\omega_j-\Delta_i > 0, \forall i=2.
\end{align*}}                  \\ \cline{2-3} 
                                         &       \parbox{3cm}{ \begin{align*} \omega_1 &=\frac{\varpi}{2 q_3}-\frac{\Delta_2}{2}-\frac{\rho_3(q_3-1)}{2 q_3}+\psi_1 (\vect{\epsilon})\\ 
                                         \omega_2 &=\frac{\varpi}{2 q_3}+ \frac{\Delta_2}{2}-\frac{\rho_3(q_3-1)}{2 q_3} +\psi_2(\vect{\epsilon}) \\
 \omega_3 &= \frac{\varpi (q_3-1)}{q_3} + \frac{\rho_3(q_3-1)}{q_3} +\psi_3(\vect{\epsilon})
\end{align*}}                            &        \parbox{3cm}{ \begin{align*} & \omega_i- \left(\sum_{j=1}^{i-1} \omega_j +\epsilon_i \sum_{k=i+1}^{K} \omega_k + \rho_i \right)(q_i-1) > 0, \forall i=1,2.\\
                & \omega_i - \sum_{j=1}^{i-1}\omega_j-\Delta_i > 0, \forall i=3.
\end{align*}}                     \\ \cline{2-3} 
                                         &         \parbox{3cm}{ \begin{align*} \omega_1 &=\frac{\varpi}{4}-\frac{\Delta_2}{2}-\frac{\Delta_3}{4}+\psi_1(\vect{\epsilon}) \\ 
                                         \omega_2 &=\frac{\varpi}{4}+ \frac{\Delta_2}{2}-\frac{\Delta_3}{4} +\psi_2(\vect{\epsilon}) \\
 \omega_3 &= \frac{\varpi}{2} + \frac{\Delta_3}{2} +\psi_3(\vect{\epsilon})
\end{align*}}                       &        \parbox{3cm}{ \begin{align*} & \omega_i- \left(\sum_{j=1}^{i-1} \omega_j +\epsilon_i \sum_{k=i+1}^{K} \omega_k + \rho_i \right)(q_i-1) > 0, \forall i=1,2,3.
\end{align*}}               \\ \hline
\end{tabular}
\end{table*}
\begin{figure*}
\begin{mylemma}
\label{lem:cfp}
Given that necessary conditions are satisfied, closed-form optimal power allocation of UE$_1$ and UE$_i$ ($i \geq 2$) are given in \eqref{eq:omega_1} and \eqref{eq:omega_i_1}-\eqref{eq:omega_i_2}, respectively.  
\begin{align}
\label{eq:omega_1}
 \omega_{1}^\star&=\frac{\varpi}{\prod\limits^{K}_{\substack{m=2 \\ m\notin \mathcal{A}_\mu}}q_{m}\prod\limits^{K}_{\substack{n=2 \\ n\in \mathcal{A}_\mu}}2}-\sum\limits^{K}_{\substack{k=2 \\ k\notin \mathcal{A}_\mu}}\frac{(q_{k}-1){\rho_{k}}}{\prod\limits^{k}_{\substack{m=2 \\ m\notin \mathcal{A}_\mu}}q_{m}\prod\limits^{k}_{\substack{n=2 \\ n \in \mathcal{A}_\mu}}2} -\sum\limits^{K}_{\substack{k=2 \\ k\notin \mathcal{A}_\varphi}}\frac{\Delta_{k}}{2\prod\limits^{k-1}_{\substack{m=2 \\ m\notin \mathcal{A}_\mu}}q_{m}\prod\limits^{k-1}_{\substack{n=2 \\ n\in \mathcal{A}_\mu}}2} + \psi_1
\end{align}
\begin{align}
\label{eq:omega_i_1}
 \omega_{i}^\star&=\frac{\varpi}{\prod\limits^{K}_{\substack{m=i \\ m \notin \mathcal{A}_\mu}}q_{m}\prod\limits^{K}_{\substack{n=i \\ n \in \mathcal{A}_\mu}}2}
-\sum\limits^{K}_{\substack{k=i \\ k\notin \mathcal{A}_\mu}}\frac{(q_{k}-1)\rho_{k}}{\prod\limits^{k}_{\substack{m=i \\ m\notin \mathcal{A}_\mu}}q_{m}\prod\limits^{k}_{\substack{n=i \\ n\in \mathcal{A}_\mu}}2}-\sum\limits^{K}_{\substack{k=i \\ k\notin \mathcal{A}_\varphi}}\frac{\Delta_{k}}{2\prod\limits^{j-1}_{\substack{m=i \\ m \notin \mathcal{A}_\mu}}q_{m}\prod\limits^{j-1}_{\substack{n=i \\ n\in \mathcal{A}_\mu}}2}+\Delta_{i} +\psi_{i+1},  i \in \mathcal{A}_\mu. 
\end{align}
\begin{align}
\label{eq:omega_i_2}
  \omega_{i}^\star&=(q_{i}-1) \times \left[\frac{\varpi}{\prod\limits^{K}_{\substack{m=i \\ m \notin \mathcal{A}_\mu}}q_{m}\prod\limits^{K}_{\substack{n=i \\ n \in \mathcal{A}_\mu}}2}
-\sum\limits^{K}_{\substack{k=i \\ k \notin \mathcal{A}_\mu}}\frac{(q_{k}-1){\rho_{k}}}{\prod\limits^{k}_{\substack{m=i \\ m \notin \mathcal{A}_\mu}}q_{m}\prod\limits^{k}_{\substack{n=i \\ n\in \mathcal{A}_\mu}}2} -\sum\limits^{K}_{\substack{k=i \\ k\notin \mathcal{A}_\varphi}}\frac{\Delta_{k}}{2\prod\limits^{k-1}_{\substack{m=i \\ m\notin \mathcal{A}_\mu}}q_{m}\prod\limits^{k-1}_{\substack{n=i \\ n\in \mathcal{A}_\mu}}2}+\rho_{i} \right]+\psi_i,   i \notin \mathcal{A}_\mu
\end{align}
where $\psi_i$ is the term related to SIC imperfections and defined as
\begin{align}
\label{eq:psi}
\nonumber \psi_i (\vect{\epsilon})&= \sum\limits^{K-1}_{\substack{k=i \\ k\in \mathcal{A}_\varphi}}S_{k}\epsilon_{k}(\tilde{q}_k-1)(q_{i}-1) \left \{  \frac{\varpi\left( \prod\limits^{K}_{\substack{m=k+1 \\ m\notin \mathcal{A}_\mu} } q_{m}\prod\limits^{K}_{\substack{n=k+1 \\ n\in \mathcal{A}_2^{}}}2-1 \right)} {\prod\limits^{K}_{\substack{m=2 \\ m\notin \mathcal{A}_\mu} } q_{m}\prod\limits^{K}_{\substack{n=2 \\ n \in \mathcal{A}_\mu}}2}    +\sum\limits^{K}_{\substack{l=k+1 \\l\notin \mathcal{A}_\mu}}\frac{ (q_{l}-1)}{\prod\limits^{l}_{\substack{m=2 \\ m \notin \mathcal{A}_\mu} } q_{m}\prod\limits^{n}_{\substack{n=2 \\ n \in \mathcal{A}_\mu } }2} \right. \\
\nonumber & \left.  +\sum\limits^{K}_{\substack{l=k+1 \\l\in \mathcal{A}_\mu}}\frac{\Delta_{l} }{\prod\limits^{l}_{\substack{m=2 \\ m\notin \mathcal{A}_\mu}}q_{m} \prod\limits^{l}_{\substack{n=2 \\ n \in \mathcal{A}_\mu} } 2} \right \} 
+\sum\limits^{K-1}_{\substack{k=2 \\k\in \mathcal{A}_\varphi}}\sum\limits^{K-k+1}_{\substack{k_{1}=k }}...\sum\limits^{K-1}_{\substack{k_{i-1}=k_{i-2}+1 }}\sum\limits^{K}_{\substack{k_{i}=k_{i-1}+1 }}S_{k1}\epsilon_{k_{1}}\epsilon_{k_{2}}...\epsilon_{k_{i}}\prod\limits^{k_{i}}_{\substack{j=k_{1}}}(q_{j}-1)\\ 
&\times \left \{  \frac{\varpi \left(\prod\limits^{K}_{\substack{m=k_{i}+1 \\ m \notin \mathcal{A}_\mu} } q_{m} \prod\limits^{K}_{\substack{n=k_{i}+1 \\ n\in \mathcal{A}_2^{}}}2-1 \right)} {\prod\limits^{K}_{\substack{m=2 \\ m\notin \mathcal{A}_\mu} } q_{m}\prod\limits^{K}_{\substack{n=2 \\ n\in \mathcal{A}_\mu}}2} +\sum\limits^{K}_{\substack{l=k_{i}+1 \\l\notin \mathcal{A}_\mu}}\frac{ (q_{l}-1)}{\prod\limits^{l}_{\substack{m=2 \\ m\notin \mathcal{A}_\mu} } q_{m} \prod\limits^{l}_{\substack{n=2 \\ n \in \mathcal{A}_\mu } }2}+\sum\limits^{K}_{\substack{l=k_{i}+1 \\l\in \mathcal{A}_\mu}}\frac{\Delta_{l} }{\prod\limits^{l}_{\substack{m=2 \\ m \notin \mathcal{A}_\mu}}q_{m} \prod\limits^{l}_{\substack{n=2 \\ n\in \mathcal{A}_\mu} } 2} \right \}, i \notin \mathcal{A}_\mu.
\end{align}
where $\tilde{q}_k=\begin{cases}
q_{k}, &k\neq i \cr 2, &k=i \cr \end{cases} ,S_{k}=\begin{cases}
-1, &k\neq i \cr 1, &k= i \cr \end{cases} ,\text{ and }S_{k_1}=\begin{cases}
-1, &k_1\neq i \cr 1, &k_1= i\cr \end{cases}.$
\begin{proof}
Please see Appendix \ref{proof:KKT}.
\end{proof}
\end{mylemma}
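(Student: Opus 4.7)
The plan is to derive the closed forms by exploiting the fact that the slave problem $\vect{\mathrm{S}}$ has only affine constraints, so the KKT system already written in (\ref{eq:Lag_derv_omg2})--(\ref{eq:Lag_derv_mu}) is necessary for optimality (linearity constraint qualification). First I would argue that the cluster-power budget $\mathrm{S}_1$ is tight at the optimum, since any leftover power could be redirected to $\omega_{K}^\star$ to strictly increase the SINR of every UE$_i$ with $i<K$ without violating any PDSC or QoS constraint for the other UEs. This gives $\lambda^\star>0$ together with $\sum_{i}\omega_i^\star=\varpi$. Next, for each $i\geq 2$ at least one of $\mathrm{S}_2^i$ and $\mathrm{S}_3^i$ must hold with equality at the optimum; otherwise the slack could be absorbed to increase the SINR of a neighboring UE. This yields the partition $\{2,\dots,K\}=\mathcal{A}_\mu\cup\mathcal{A}_\varphi$, where $i\in\mathcal{A}_\mu$ corresponds to the PDSC being active (so the denominator in the closed form carries a factor of $2$ rather than $q_i$) and $i\in\mathcal{A}_\varphi$ corresponds to the QoS constraint being active; UE$_1$ contributes only the QoS equality because the PDSC is void for $i=1$.

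Second, I would collect the resulting $K$ active equalities into a linear system and solve it by back-substitution from $i=K$ down to $i=1$. Pretending momentarily that $\epsilon_i=0$, the active QoS equality reads $\omega_i=(q_i-1)\bigl(\sum_{j<i}\omega_j+\rho_i\bigr)$ and the active PDSC reads $\omega_i=\sum_{j<i}\omega_j+\Delta_i$. Both are triangular, so successive substitution into the budget equation expresses every $\omega_i^\star$ as a scaled copy of $\varpi$ minus weighted combinations of $\rho_k$ and $\Delta_k$. The nested products $\prod q_m$ and $\prod 2$ in the denominators of (\ref{eq:omega_1})--(\ref{eq:omega_i_2}) arise precisely from composing these two recursions: each step in $\mathcal{A}_\varphi$ contributes a factor of $q_m$, and each step in $\mathcal{A}_\mu$ contributes a factor of $2$.

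Third, I would reintroduce the $\epsilon_i$ terms. Each active QoS equality for $i\in\mathcal{A}_\varphi$ then also contains $\epsilon_i\sum_{k>i}\omega_k$, so the system is no longer triangular. The natural remedy is a perturbation expansion $\omega_i^\star=\omega_i^{(0)}+\psi_i(\vect{\epsilon})$ around the $\vect{\epsilon}=\vect{0}$ solution $\omega_i^{(0)}$ obtained above. Substituting the baseline $\epsilon_k\omega_k^{(0)}$ into the recursion generates first-order corrections; these corrections themselves re-enter the recursion for smaller indices, producing a tower of contributions indexed by ordered multi-indices $k_1<k_2<\dots<k_i$. Organizing these contributions by the depth of coupling yields exactly the nested-sum structure of $\psi_i$ in (\ref{eq:psi}), with the quantities $\tilde q_k$, $S_k$ and $S_{k_1}$ arising from distinguishing whether the index $i$ itself appears in the multi-index (in which case its own denominator switches from $q_i$ to $2$ and the sign flips).

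The hard part will not be any single manipulation, but rather the combinatorial bookkeeping: each nested term depends on whether every intermediate index lies in $\mathcal{A}_\mu$ or $\mathcal{A}_\varphi$, so an induction on the depth of the multi-index is required to show that partial sums telescope into the product form stated in (\ref{eq:psi}). As a final sanity check I would specialize the general expressions to $K=2$ and $K=3$ and verify that they reproduce all four sub-cases listed in Table \ref{tab:pwr_cond}, first in the perfect-SIC limit $\vect{\epsilon}=\vect{0}$ (where the elementary closed forms are immediately visible) and then for small $\epsilon_i>0$ to confirm the sign and magnitude of the leading $\psi_i$ corrections.
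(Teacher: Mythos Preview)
Your plan is essentially the same strategy the paper uses: fix an active set, take the $K$ active primal equalities together with the tight budget, and solve the resulting linear system. The paper's appendix simply carries this out by hand for the representative case $K=4$ with $\Xi_1=\{\lambda,\mu_2,\mu_3,\mu_4\}$, writing out $\omega_1,\dots,\omega_4$ explicitly (with the $\epsilon$-terms included from the start), and then asserts that repeating this for all active sets and all $K$ yields the general formulas (\ref{eq:omega_1})--(\ref{eq:psi}), noting that they collapse to the perfect-NOMA expressions of \cite{Ali2016} when $\vect{\epsilon}=\vect{0}$. Your decomposition into an $\epsilon=0$ baseline plus a correction $\psi_i(\vect{\epsilon})$ is a perfectly valid reorganisation of the same computation and arguably explains the nested multi-index structure of $\psi_i$ more transparently than the paper's pattern-matching.

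One point worth correcting: the system does \emph{not} become non-triangular when the $\epsilon$-terms are reinstated. If you solve from $i=K$ downwards and use the tight budget to write $\sum_{j<i}\omega_j=\varpi-\omega_i-\sum_{k>i}\omega_k$, the active QoS equality becomes $q_i\,\omega_i=(q_i-1)\bigl(\varpi-(1-\epsilon_i)\sum_{k>i}\omega_k+\rho_i\bigr)$ and the active PDSC becomes $2\,\omega_i=\varpi-\sum_{k>i}\omega_k+\Delta_i$. In both cases the right-hand side depends only on the already-determined $\omega_k$, $k>i$, so the recursion is triangular and its solution is automatically a \emph{polynomial} in the $\epsilon_k$'s with each $\epsilon_k$ appearing to at most first power. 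This is exactly how the paper's $K=4$ derivation proceeds, and it shows that your ``perturbation expansion'' terminates after finitely many steps without any separate induction on the depth of the multi-index; the products $\epsilon_{k_1}\cdots\epsilon_{k_i}$ in (\ref{eq:psi}) simply record how many $(1-\epsilon_k)$ factors were picked up while composing the one-step recursions. Your proposed sanity check against Table~\ref{tab:pwr_cond} is exactly the kind of verification the paper's argument rests on.
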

\begin{mylemma}
\label{lem:opt_lamda}
The optimal value of the slave problems' Lagrange multipliers are derived as
\begin{align}
\label{eq:lag_mul_1}
\lambda_{c,r}^\star&=\begin{cases} \frac{1} { \varpi(t) + \rho^{K} }+\mu_{K}^{\star} 
-\sum_{i =1}^{K-1} \mu_{i}^{\star} \epsilon_i \left( q_i -1\right) \\
 -\sum_{i=1}^{K-1} \frac{ \epsilon_i \omega_i^{\star}}{ \left(\sum_{j=1}^{i-1}{\omega_{j}^{\star}} +\epsilon_i \sum_{k=i+1}^{K}  \omega_{k}^{\star}+ \rho_i \right) \left(\sum_{j=1}^{i}{\omega_{j}^{\star}} +\epsilon_i \sum_{k=i+1}^{K}  \omega_{k}^{\star}+ \rho_{i} \right)}, \text{ if } K \in \mathcal{A}_\mu \\ 
 \frac{1} { \varpi(t)+ \rho_{K_c^r} }+\varphi_{K}^{\star},  \text{ if } K \notin \mathcal{A}_\mu  \end{cases} ,\\
\mu_{2}^{\star}&=\frac{\kappa_{1}^{\star}-\kappa_{2}^{\star}}{q_2}, \: \: \:\mu_{i}^{\star}= \frac{\kappa_{i-1}^{\star}-\kappa_{i}^{\star}+\mu_{i-1}^{\star}(1+\epsilon_{i}(q_{i-1}-1))}{q_i}, i \geq 3, \\
 \label{eq:lag_mul_2}
\varphi_{2}^{\star}&=\frac{\kappa_{1}^{\star}-\kappa_{2}^{\star}}{2}, \: \: \:
\varphi_{i}^{\star}= \frac{\kappa_{i-1}^{\star}-\kappa_{i}^{\star}+\varphi_{i-1}^{\star}}{2},  i\geq3,
\end{align}
where $\kappa_{i}^{\star}$ is obtained by substituting optimal power allocations $\vect{\omega^\star}$ (which are obtained from Lemma \ref{lem:cfp} for a given $\varpi(t)$) into the first three terms of (\ref{eq:Lag_derv_omg2}). 
\begin{proof}
Please see Appendix \ref{proof:KKT}.
\end{proof}
\end{mylemma}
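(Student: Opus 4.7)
The plan is to extract closed-form expressions for all three sets of dual variables directly from the stationarity conditions (\ref{eq:Lag_derv_omg2}) and to exploit complementary slackness with respect to the active-constraint partitions $\mathcal{A}_\mu$ and $\mathcal{A}_\varphi$. Writing the stationarity equation for $\omega_i^\star$ compactly as $\kappa_i^\star - \lambda^\star + \varphi_i^\star - \sum_{j=i+1}^{K} \varphi_j^\star + \mu_i^\star - \sum_{j=1}^{i-1} \mu_j^\star \epsilon_j(q_j-1) - \sum_{k=i+1}^{K} \mu_k^\star(q_k-1) = 0$, where $\kappa_i^\star$ collects the first three log-derivative terms of (\ref{eq:Lag_derv_omg2}) evaluated at the primal optimum supplied by Lemma \ref{lem:cfp}, produces a triangular linear system in $(\lambda^\star, \vect{\mu}^\star, \vect{\varphi}^\star)$ that can be solved by backward substitution from $i = K$ down to $i = 2$.

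I would first read off $\lambda^\star$ from the equation at $i = K$, where the two tail sums over $\varphi_j^\star$ and $\mu_k^\star$ vanish. Because $\lambda^\star > 0$ forces complementary slackness on the cluster power budget, i.e.\ $\sum_i \omega_i^\star = \varpi(t)$, the quantity $\kappa_K^\star$ reduces to $1/(\varpi(t)+\rho_K)$ minus the residual-interference sum that already appears in (\ref{eq:lag_mul_1}). The two branches of $\lambda^\star$ then follow by case-splitting: if $K \in \mathcal{A}_\mu$, then $\varphi_K^\star = 0$ and $\mu_K^\star$ survives; if $K \notin \mathcal{A}_\mu$, then $\mu_K^\star = 0$ and $\varphi_K^\star$ survives, since at the optimum each index lies in exactly one of the two active sets.

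Next, I would obtain the recursions for $\mu_i^\star$ and $\varphi_i^\star$ by subtracting the stationarity condition for $\omega_{i+1}^\star$ from that of $\omega_i^\star$. The $\lambda^\star$ term and all tail sums $\sum_{j \geq i+2} \varphi_j^\star$, $\sum_{k \geq i+2} \mu_k^\star$ cancel in this difference, leaving a relation of the schematic form $(\kappa_i^\star - \kappa_{i+1}^\star) + \varphi_i^\star - 2\varphi_{i+1}^\star + \mu_i^\star(1+\epsilon_i(q_i-1)) - \mu_{i+1}^\star q_{i+1} = 0$. When $i+1 \in \mathcal{A}_\mu$ I would set $\varphi_{i+1}^\star = 0$ and solve for $\mu_{i+1}^\star$, yielding the recursion in (\ref{eq:lag_mul_1}); when $i+1 \in \mathcal{A}_\varphi$ I would set $\mu_{i+1}^\star = 0$ and solve for $\varphi_{i+1}^\star$, yielding (\ref{eq:lag_mul_2}). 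The base cases $\mu_2^\star$ and $\varphi_2^\star$ follow because $\varphi_1^\star$ does not exist (the PDSC starts at $i = 2$) and $\mu_1^\star$ vanishes in the corresponding configurations, collapsing the recursion coefficients to the stated forms.

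The main obstacle is purely bookkeeping: one has to verify that the telescoping of the $\varphi$ and $\mu$ tail sums is clean in both branches and that every index $i \geq 2$ lies in exactly one of $\mathcal{A}_\mu, \mathcal{A}_\varphi$ at the optimum, so that one of $\mu_i^\star, \varphi_i^\star$ is always zero and the remaining multiplier is uniquely determined by the difference equation. Once this partition is fixed, the backward substitution proceeds mechanically and the closed-form expressions drop out; no additional optimization is required beyond evaluating $\kappa_i^\star$ at the primal values from Lemma \ref{lem:cfp}.
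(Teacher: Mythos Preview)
Your proposal is correct and follows essentially the same route as the paper: both arguments compress the stationarity condition (\ref{eq:Lag_derv_omg2}) into the form $\kappa_i^\star = \lambda^\star - \varphi_i^\star + \sum_{j>i}\varphi_j^\star - \mu_i^\star + \sum_{j<i}\mu_j^\star\epsilon_j(q_j-1) + \sum_{k>i}\mu_k^\star(q_k-1)$, take differences of consecutive equations so that $\lambda^\star$ and the tail sums cancel, and then invoke the $\mathcal{A}_\mu/\mathcal{A}_\varphi$ partition to zero out one of $\mu_i^\star,\varphi_i^\star$ and solve for the other. The only point worth tightening is the direction of the recursion: the formulas in the lemma are \emph{forward} in $i$ (each $\mu_i^\star$ or $\varphi_i^\star$ depends on the index $i-1$), so you first run $i=2,\ldots,K$ and only then read off $\lambda^\star$ from the $i=K$ equation; calling this ``backward substitution from $i=K$'' is a harmless slip in wording. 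The paper's appendix additionally checks that $\kappa_{i-1}^\star-\kappa_i^\star\geq 0$, which certifies non-negativity of the resulting multipliers; you may want to add that verification step.
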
 
\end{figure*}

In the slave problem, there exists a total of $2K$ Lagrange multipliers which can be classified into three subsets $\mathcal{A}_1=\{\lambda\}$, $\mathcal{A}_2=\{\mu_i \vert 1 \leq i \leq K\}$, and $\mathcal{A}_3=\{\varphi_i \vert 2 \leq i \leq K \}$. Hence, the solution of each slave problem requires the KKT condition verification of $2^{2K}$ Lagrange multiplier combinations, which is computationally impractical. Fortunately, we only need to check $2^{K-1}$ combinations \cite{Ali2016,chong2013introduction} for the following reasons: Notice that first constraint is always active ($\lambda>0$) at the optimal point as the all available power fraction must be exploited in order to achieve maximum sumrate. However, the power level of UE$_i, \: i \geq 2$, is determined either by PDSC or QoS constraints. That is, UE$_i$ is either active on PDSCs or QoS constraints at the optimal point. 

In order to obtain a closed-form solution, we therefore need to consider the following solution set $\mathcal{A}=\{ \lambda, \mu_i \text{ or } \varphi_i  \vert 2 \leq i \leq K\}$. For instance, if $\mathcal{A}=\{ \lambda,  \mu_2, \varphi_3, \varphi_4 \}$ then the optimal power allocations can be derived from active primal constraints, i.e., $\{ \mathrm{S}_1, \mathrm{S}_2^2, \mathrm{S}_3^3, \mathrm{S}_3^4 \}$, which also requires the satisfaction of corresponding primal KKT conditions, i.e., $\{ \mathrm{S}_2^1, \mathrm{S}_3^2, \mathrm{S}_2^3, \mathrm{S}_2^4 \}$. Table II shows the optimal power allocations and corresponding KKT conditions for cluster sizes $K=2$ and $K=3$. In Table II, the terms $\psi_i (\vect{\epsilon}), \forall i,$ are due to the residual interference and mainly characterized by the FEF. On the other hand, the terms including $\Delta_i$ are because of the PDSCs and chiefly determined by $p_\Delta$. For each cluster size, the left column provides the power allocations cluster members for each of $2^{K-1}$ cases whereas the right column indicates the set of KKT conditions to be satisfied by the corresponding power allocations on the same row of the left column. For each of $2^{K-1}$ cases, power allocations are first computed by expressions on the left column, then substituted into the corresponding equations on the right column to check if the corresponding KKT conditions are satisfied. Thereafter, optimal power allocations are determined by the case which gives the highest sumrate among the cases who satisfy the KKT conditions. Therefore, the worst case time complexity of the proposed solution in Lemma II is given as $\mathrm{O}(2^{K-1}+K\log K) $ where the first term is the cost of calculating and checking $2^{K-1}$ cases and the second term is the cost of sorting and selecting the best case. Since the complexity of the first term dominates that of the second, overall complexity can be approximated by $\mathrm{O}(2^{K-1})$. Generalizing the Table \ref{tab:pwr_cond}, Lemma \ref{lem:cfp} provides the closed-form expression for optimal power allocations for an imperfect NOMA cluster of size $K$, where $\mathcal{A}_\mu=\mathcal{A} \setminus \mathcal{A}_2$, $\mathcal{A}_\varphi=\mathcal{A} \setminus \mathcal{A}_3$. 

\subsection{Secondary Master Problems}
Secondary master problems are responsible for determining the power allowances of each cluster $\varpi_c^r$, which is iteratively updated according to the feedback received from slave problems. Based on the optimal power allocations obtained from Lemma \ref{lem:cfp}, $\vect{\omega_{c,r}^\star}$, secondary master problems can be formulated at iteration $t$ as in $\vect{\mathrm{M}_2^c (t)}$, $0 \leq c \leq S$.
\begin{equation*}
\hspace*{0pt}
 \begin{aligned}
 & \hspace*{0pt} \vect{\mathrm{M}_2^c(t)}: \underset{ \vect{0} \preceq \vect{\varpi_c(t)}\preceq \vect{1}}{\max}
& & \hspace*{3 pt}  \theta_c^r B \sum_{i \in \mathcal{K}_c^r} \log_2 \left( 1+\gamma_{c,r}^i(\vect{\omega_{c,r}^\star(t-1)}) \right) \\
& \hspace*{10pt} \mbox{ }\hspace*{40pt} \text{s.t.}
&&  \sum_{r \in \mathcal{R}_c} \varpi_c^r(t) \leq 1, \forall c
\end{aligned}
\end{equation*}
which handles the power fraction allocated for clusters and executed by BSs in a parallel and independent fashion. The only constraint of $\vect{\mathrm{M}_2^c}$ corresponds to $\mathrm{C}_o^3$ of $\vect{\mathrm{P}_\mathrm{o}}$. By exploiting the primal decomposition method \cite{decomp},  BS$_c$ can update the power fractions of clusters  based on the following subgradient method \cite{boyd2007notes}
\begin{equation}
\label{eq:grad}
\varpi_c^r(t+1)=\left[ \varpi_c^r(t)+ \nu \lambda_{c,r}^{\star} (\varpi_c^r(t))\right]_{\Pi}
\end{equation}
where $t$ is time index, $\nu$ is a positive step size, $[\cdot]_{\Pi}$ denotes the projection onto the feasible set $\Pi \triangleq \{ \sum_{r \in \mathcal{R}_c} \varpi_c^r(t) \leq 1, 0 \leq \varpi_c^r(t) \leq 1\}$, and  $\lambda_{c,r}^{\star} \left( \varpi_c^r(t) \right)$ is the subgradient given by the optimal Lagrange multiplier associated to the first constraint of the slave problem, which is given in Lemma \ref{lem:opt_lamda}.
\subsection{Primary Master Problem}
Contingent upon the optimal power allocations, the primary master problem determines the bandwidth allocation at iteration $k$ as in $\vect{\mathrm{M}_1(k)}$ 
\begin{equation*}
 \begin{aligned}
 & \hspace*{0pt} \vect{\mathrm{M}_1(k)}: \underset{\vect{0} \preceq\vect{\theta(s)} \preceq \vect{1}}{\max}
& & \hspace*{-5 pt}  B \sum_{c, r \in \mathcal{R}_c} \theta_c^r(k) \sum_{i \in \mathcal{K}_c^r} \log_2 \left( 1+\gamma_{c,r}^i( \theta_c^r(k)) \right) \\
& \hspace*{0pt} \mbox{ }\hspace*{50pt} \text{s.t.}
&& \hspace*{-10pt}  \sum_{c,r \in \mathcal{R}_c} \theta_c^r(k) \leq 1 \\
& & & 0 \leq \omega_{c,r}^{i,\star}- (q_{c,r}^i-1) \times \\
&&& \hspace*{-10pt}  \left(\sum_{j=1}^{i-1} \omega_{c,r}^{j,\star}+\epsilon_i \sum_{k=i+1}^{K_c^r} \omega_{c,r}^{k,\star} + \rho_{c,r}^i \right) , \textbf{ } \forall c,r.
\end{aligned}
\end{equation*}
which is a highly non-convex problem as the objective is a non-convex function of $\theta_c^r$, i.e., $\rho_{c,r}^i=N_0B\theta_c^r(k)(P_c g_c^i)^{-1}$. The first and second constraints of $\vect{\mathrm{M}_1(k)}$ corresponds to $\mathrm{C}_o^5$ and $ \mathrm{C}_o^6$ of $\vect{P}_o$, respectively. Accordingly, we relax $\vect{\mathrm{M}_1(k)}$ into a convex problem by employing the achieved spectral efficiency in the previous iteration. That is, the objective function is changed to $ B \sum_{c, r \in \mathcal{R}_c} \theta_c^r(k) U_c^r\left[\theta_c^r(k-1)\right]$ where $U_c^r\left[\theta_c^r(k-1)\right] =\sum_{i \in \mathcal{K}_c^r} \log_2 \left( 1+\gamma_{c,r}^i( \theta_c^r(k-1)) \right)$ is the utility obtained by clusters in the previous iteration. We refer interested readers to Appendix \ref{proof:conv} for the convexity analysis of the primary master problem and its relaxation. It is worth remarking that $\vect{\mathrm{M}_1(k)}$ only requires the reports of utility achieved by clusters and there is no need to know any other information, e.g., clusters, channel gains, power allocations, etc.

\begin{algorithm}
 \caption{\small Hierarchical Distributed Power-Bandwidth Allocation}
  \label{alg:HD}
\begin{algorithmic}[1]
\small
 \renewcommand{\algorithmicrequire}{\textbf{Input:}}
 \renewcommand{\algorithmicensure}{\textbf{Output:}}
\REQUIRE Channel gains
 \STATE $(k,t) \leftarrow 0$
  \STATE $\vect{\alpha_c} \leftarrow$ BS$_c$ forms its clusters using WMM method, $\forall c$.
    \STATE $\vect{\theta(k)} \leftarrow$   Initialize the bandwidth allocations, $\forall c, r$. 
  \WHILE {$k \in \mathcal{T}_1$}
   \WHILE {$t \in \mathcal{T}_2$}
  \STATE $\vect{\omega_{c,r}(t)} \leftarrow$ Check power levels \& conditions, $\forall c, r$.\STATE $\vect{\omega_{c,r}^{\star} (t)} \leftarrow$ Select the cases with max. sumrate, $\forall c, r$.\STATE $\vect{\lambda^\star(t)} \leftarrow$ Calculate optimal Lagrange multipliers.\STATE $\vect{\varpi_{c}^r (t+1)} \leftarrow$ Update power fractions by (\ref{eq:grad}), $\forall c, r.$\STATE  $t \leftarrow t+1$ 
\ENDWHILE
\STATE Receive utilities $\vect{U_c(k)}$ from all BSs.
\STATE $\vect{\theta(k+1)}  \leftarrow$ Update bandwidths by $\vect{\mathrm{M}_1(k)}$.
\STATE  $k \leftarrow k+1$
\ENDWHILE
\hspace{-16pt}\RETURN  Power and bandwidth allocations
 \end{algorithmic}
 \end{algorithm}
 
Proposed distributed framework is summarized in Algorithm \ref{alg:HD} which is certainly a visualization of Fig. \ref{fig:distributed}.  In Algorithm \ref{alg:HD}, BSs are only required to know channel gains of their own UEs. Following the initialization of the cluster bandwidths, which are independently executed by each BS in the order of $\mathrm{O}(U_c \log U_c)$, bandwidth and power allowances of clusters are updated by outer and inner while loops, respectively. The inner loop is processed by BSs in a parallel fashion, where optimal powers and Lagrange multipliers are first calculated by (\ref{eq:omega_1})-(\ref{eq:omega_i_2}) and (\ref{eq:lag_mul_1})-(\ref{eq:lag_mul_2}), respectively. Notice that calculations of (\ref{eq:omega_1}) and (\ref{eq:omega_i_2}) is the most time consuming part of the while loops and it can be given for BS$_c$ as $\mathrm{O}\left(2^{\sum_r (K_c^r-1)}\right)$. Thereafter, power fractions for each cluster are updated as per (\ref{eq:grad}) until a termination constraint satisfied. For outer loop, the MBS first receive utilities of clusters from each BS, updates the cluster bandwidths as per primary master problem, and disseminates results to the BSs to receive utilities of the most recent update. Please note that only physical message passing takes place between smallcells and the MBS to report obtained utilities, which is in order of the total number of clusters and decreases with higher cluster sizes. Therefore, proposed distributed method has a low communication overhead since it only requires the broadcasting of $R$ bandwidth proportion and corresponding cluster utility feedback to update the bandwidth allocation. Notice that message passing between secondary master problem and its slaves does not cause any communication overhead as they are internal operations of BS processors.

Although Algorithm \ref{alg:HD} optimizes both power and bandwidth to satisfy QoS requirements with the maximum possible network sumrate, it is possible to encounter some infeasible NOMA scenarios. Since NOMA is already considered as an alternative transmission scheme underlying the existing OMA systems (e.g., Multiuser Superposition Transmission (MUST) \cite{Standard15}), hybridization of OMA and NOMA schemes would be quite useful to handle such scenarios by switching members of an infeasible NOMA cluster to OMA mode. 

\section{Numerical Results and Analysis}
\label{sec:res}
For the simulations, we consider $U$ UEs and $S$ SBSs uniformly distributed over a cell area of $500\:m \times 500\:m$ MBS. QoS requirements of UEs are randomly determined with a mean of 1.5 Mbps. UEs are assumed to be equipped with a single antenna and single SIC receiver. All results are obtained by averaging over 100 network scenarios. Unless it is stated explicitly otherwise, we use the default simulation parameters given in Table \ref{tab:parameters}. The composite channel gain, $g_c^i$, between BS$_c$ and UE$_i$ is given as
\begin{equation}
\label{eq:gij}
g_c^i=A_c^i \delta_{c,i}^{-\eta_c^i} 10^{\xi_c^i/10} \E\{ | \tilde{g}_c^i|^2 \}
\end{equation}
where  $A_c^i $ is a constant related to antenna parameters, $\delta_{c,i}$ is the distance between the nodes, $\eta_c^i$ is the path loss exponent, $10^{\xi_c^i/10}$ represents the log-normally distributed shadowing, $\xi_c^i $ is a normal random variable representing the variation in received power with a variance of $\varrho_c^i $, i.e., $\xi_c^i \sim \mathcal{N}(0,\varrho_c^i )$, $\tilde{g}_c^i$ is the complex channel fading coefficient, $\E\{ \cdot \}$ is the expectation to average small scale fading out, and $ \E\{ | \tilde{g}_c^i|^2 \}$ is assumed to be unity.  

\subsection{Benchmark Comparison of WMM Based Clustering }
As the clustering has a significant impact on NOMA performance, we start with the comparison of the proposed sequential WMM based clustering and the following benchmark 
\begin{equation}
\underset{\vect{\alpha_c}}{\max} \: \sum_{r=1}^{R_c} \sum_{i=1}^{K_c^r} \log(1+\gamma_{c,r}^i) \text{ s.t. } \sum_{r=1}^{R_c}\alpha_{c,r}^i \leq K_c^r, \sum_{i=1}^{K_c^r} \alpha_{c,r}^i =1, \forall c.
\end{equation} 
which is handled by each BS independently. As this subsection merely focuses on the CF performance, clusters are assumed to have equal power and bandwidth allowance such that available cluster power is allocated to UEs inversely proportional to their channel gains. Benchmark problem is obtained by Solving Constraint Integer Programs (SCIP) solver \cite{MaherFischerGallyetal.2017} using OPTI toolbox developed for Matlab \cite{CW12a}. The ratio of benchmark to proposed sequential WMM based CF is presented in Fig. \ref{fig:CF1}. We omit the cluster size of $2$ as the optimal clustering can simply be obtained by bi-partite matching. The time complexity of the benchmark is illustrated in Fig. \ref{fig:CF2} where elapsed time ranges between 30 to 90 minutes for cluster sizes 3 to 7, respectively. On the other hand, the time complexity of the proposed CF is in the magnitude order of milliseconds. Fig. \ref{fig:CF} obviously show that proposed WMM based method provides very fast yet high CF performance. 

\begin{table}[t!]
\centering
\caption{Table of Parameters}
\label{tab:parameters}
\scriptsize 
\resizebox{0.48\textwidth}{!}{%
\begin{tabular}{|l|l|l|l|l|l|}
\hline
Par.         & Value       & Par.          & Value           & Par.          & Value         \\ \hline
$\eta_c^u$   & $3.76$   &  $\sigma_c^u$ & $10$ $dB$      & $\beta$   & $0.3$       \\ \hline
   $N_0$         & $-174$ $dBm$ &  $\epsilon_u$      & $10^{-5}$   & $p_{\Delta}$      & $-90$ $dBm $        \\ \hline
$K_c^r$        & $5$  & $P_m$ & $46$ $dBm$   &  $P_s$  & $30$ $dBm$    \\ \hline
$B$      & $20$ $MHz$  & $S$           & $10$             & $ U$ & $100$     \\ \hline
\end{tabular}%
}
\end{table}

\begin{figure}[t!]
    \centering
        \begin{subfigure}[b]{0.24\textwidth}
\includegraphics[width=\textwidth]{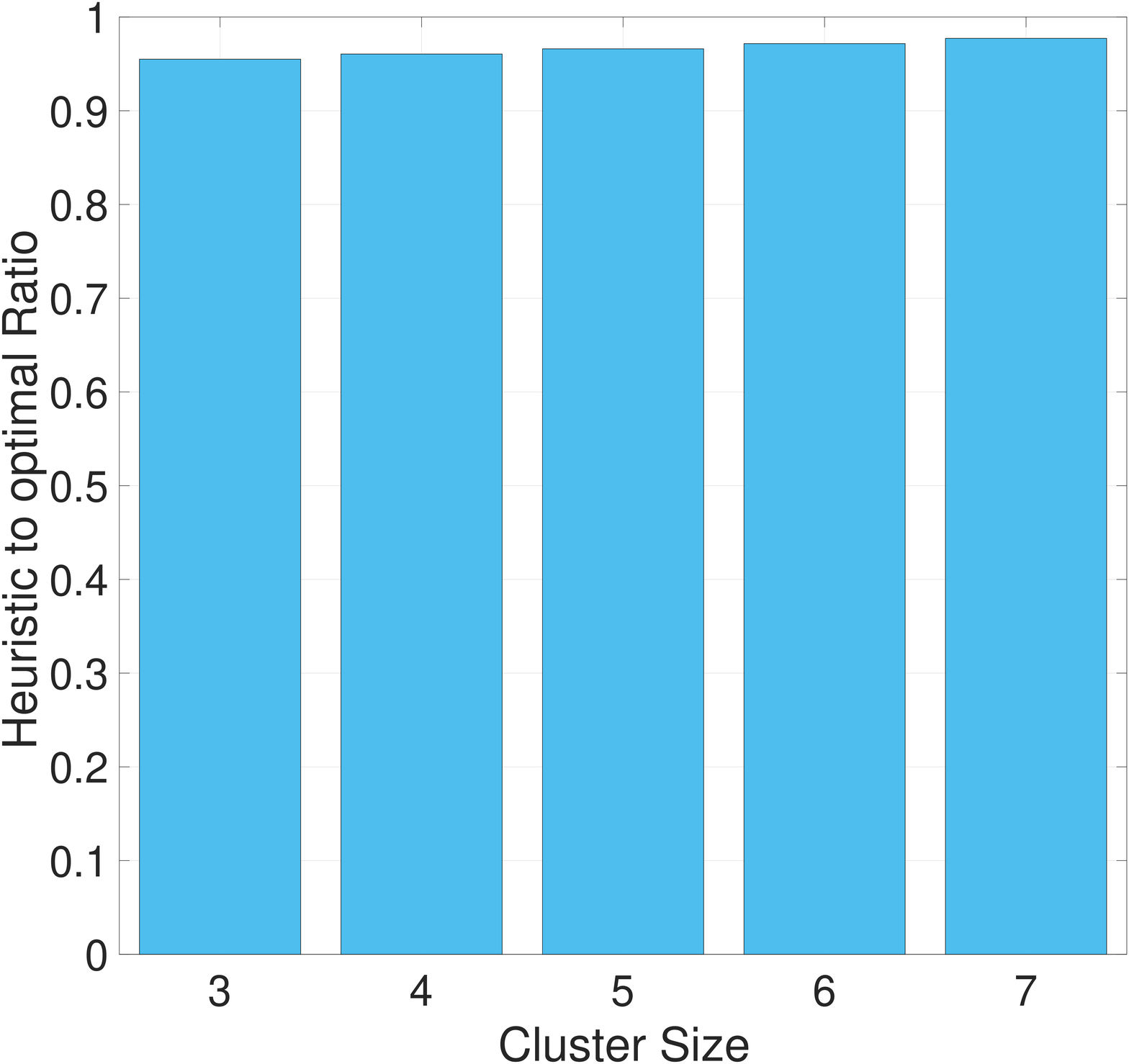}  
        \caption{Performance comparison.}
 \label{fig:CF1}
    \end{subfigure}
    \begin{subfigure}[b]{0.24\textwidth}
\includegraphics[width=\textwidth]{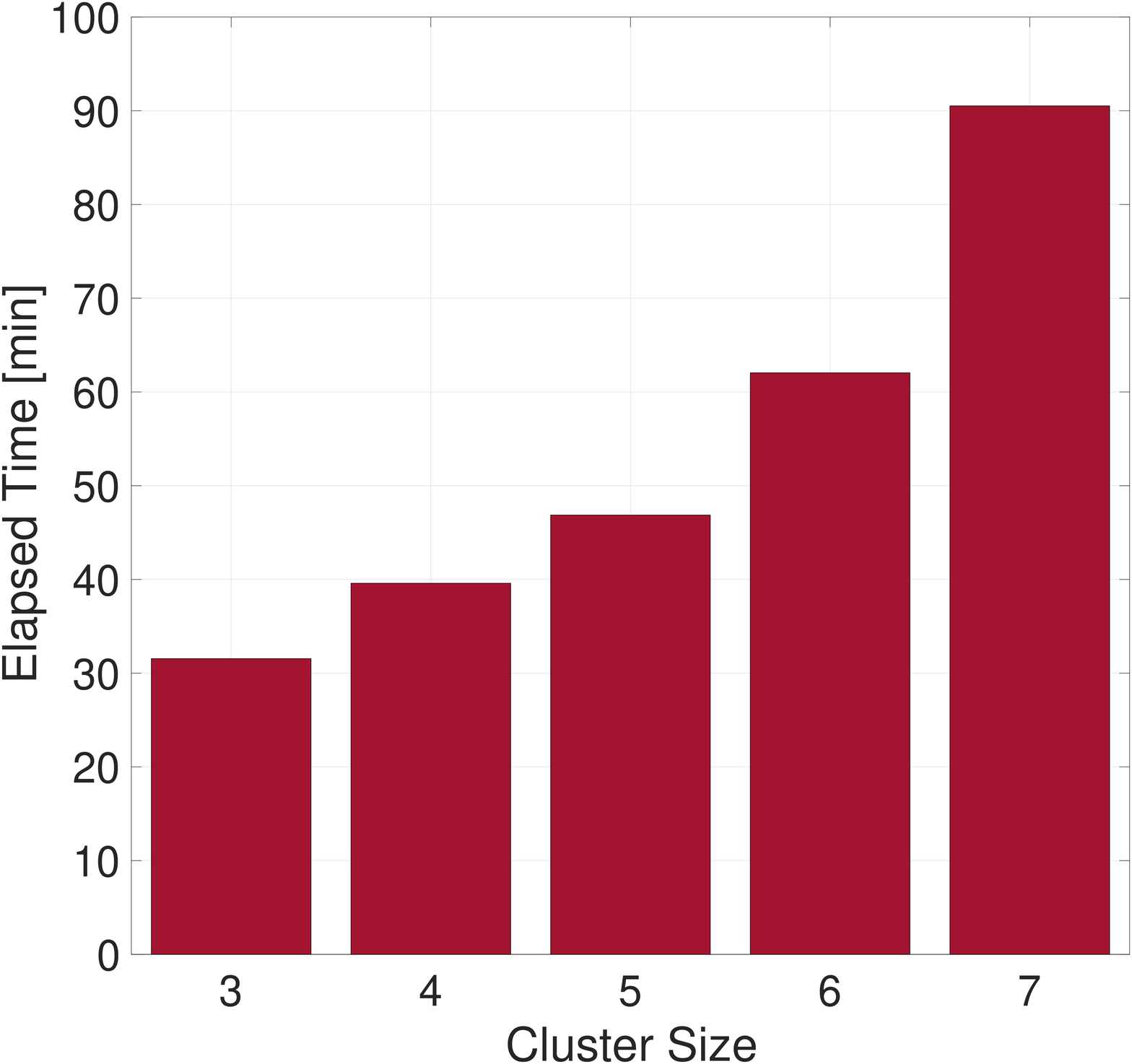}
  \caption{Time complexity comparison.}
  \label{fig:CF2}
    \end{subfigure}   
      \caption{Benchmark comparison of WMM based CF.}
    \label{fig:CF}
\end{figure}

\subsection{Impact of ICRI and ICI on NOMA Performance}

In order to investigate optimal power control behavior of an imperfect NOMA cluster, let us consider a sample cluster of size four whose members are located at 50 m, 100m, 200 m, and 400 m far away from the serving BS. Fig. \ref{fig:comp} illustrates the proposed optimal power allocations, Intra-cluster interference, and user rates in blue, red, and orange colors, respectively. While filled green markers are optimal power allocations of perfect NOMA ($\epsilon=0$) obtained based on \cite{Ali2016}, purple markers are power levels calculated by the proposed solution. As it is already pointed out in the paper, the proposed solution reduces exactly to the perfect NOMA case when $\epsilon=0$. The influence of the FEF on the residual interference can be observed from red lines which monotonically increase with $\epsilon$. On the other hand, the power weight of the last (weakest) user reduces as $\epsilon$ increases whereas those of stronger users keep escalating with $\epsilon$. Reminding that the first (strongest) user achieves the highest rate by canceling the all the interference in the perfect case, boosting $\epsilon$ yields more residual interference mostly from the weakest user. To compensate the resulting performance degradation, first user's SINR is improved by incrementing $\omega_1$ and diminishing impact of the most dominant interferer, i.e., $\omega_4$. A similar discussion also applies to other power levels, i.e., $\omega_2$ and $\omega_3$. Fig. \ref{fig:comp}  also demonstrates how NOMA can deliver a fair service while increasing the overall sum rate at the same time. As can be observed from orange lines, the strongest user obtains the highest rate while other users enjoy close rates which are around 30-40\% less than the first user's rate. This gap further reduces as $\epsilon$ increases and all users are provided by almost the same performance starting from $\epsilon=10^{-8}$. That is, performance loss caused by the SIC imperfection is compensated by extra rate offered to the strongest user.

   \begin{figure}[t!]
 \centering
 \includegraphics[width=0.48 \textwidth]{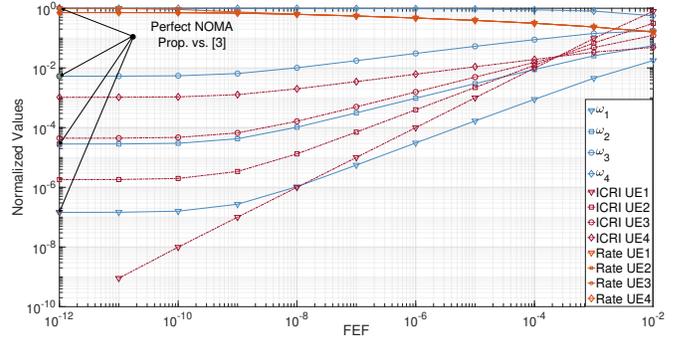}
 \caption{Impact of residual interference on power allocations and rates.}
  \label{fig:comp}
 \end{figure}
 \begin{figure}[t!]
    \centering
        \includegraphics[width=0.48 \textwidth]{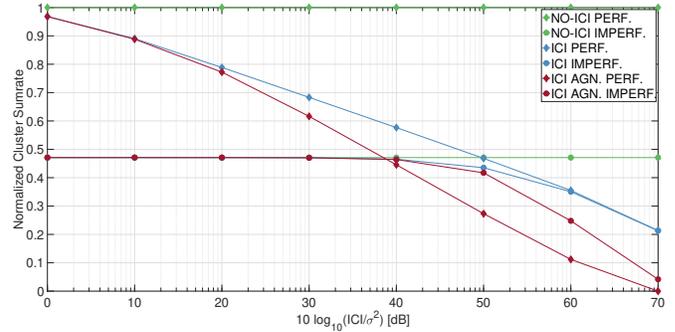}
        \caption{Impact of the ICI on the perfect and imperfect NOMA cases.}
        \label{fig:ICI}
\end{figure} 

Although the considered system model mitigates the intra-cell interference within a macro cell coverage by SIC and dedicating the entire available DL spectrum to clusters of the MBS and SBS within its coverage, neighboring BSs may still cause ICI which can be quite significant and further deteriorate the cluster performance severely. Noting that the ICI can possibly not be canceled by SIC receivers, it is expected to have a considerable influence especially on the cell-edge users, i.e., the weakest user in a cluster. Using the same setup in Fig. \ref{fig:comp}, now let us investigate the impact of ICI on the cluster performance. Fig. \ref{fig:ICI} depicts the normalized cluster sumrate against $10 \log_{10}(\frac{I_{ici}}{\sigma^2})$ that quantifies the ICI by multiples of the receiver noise power. The negative impacts of ICI on the optimal power allocation schemes can be understood by contrasting the blue and green colored ICI and NO-ICI curves, respectively. It is obvious that perfect NOMA ($\epsilon=0$) case experience a drastic performance degradation starting from very low ICI values since $I_{ici}$ is not negligible in comparison with the ICRI. On the other hand, the ICI influence on the imperfect NOMA case ($\epsilon=10^{-5}$) is distinguishable only after 40 dB (i.e., $I_{ici}$ is $10^4$ times of the receiver noise) which is exactly where the ICI starts becoming significant compared to the ICRI. More importantly, the agnostic case curves clearly illustrate that ignoring the $I_{ici}$ term in the optimal power allocations can cause even worse performance degradation. Again, the level of performance reduction in the perfect NOMA is more than that in the imperfect NOMA since the power levels obtained based on the wrong assumption of no ICI is much more different than actual ones in the former case. Fig. \ref{fig:ICI} clearly shows that the ICI must be taken into account especially when it becomes considerable by comparison with the ICRI and receiver noise. Therefore, it is necessary to suppress the dominance of the ICI for a desirable network performance.

\begin{figure}[!t]
    \centering
        \begin{subfigure}[b]{0.5\textwidth}
        \includegraphics[width=1\textwidth]{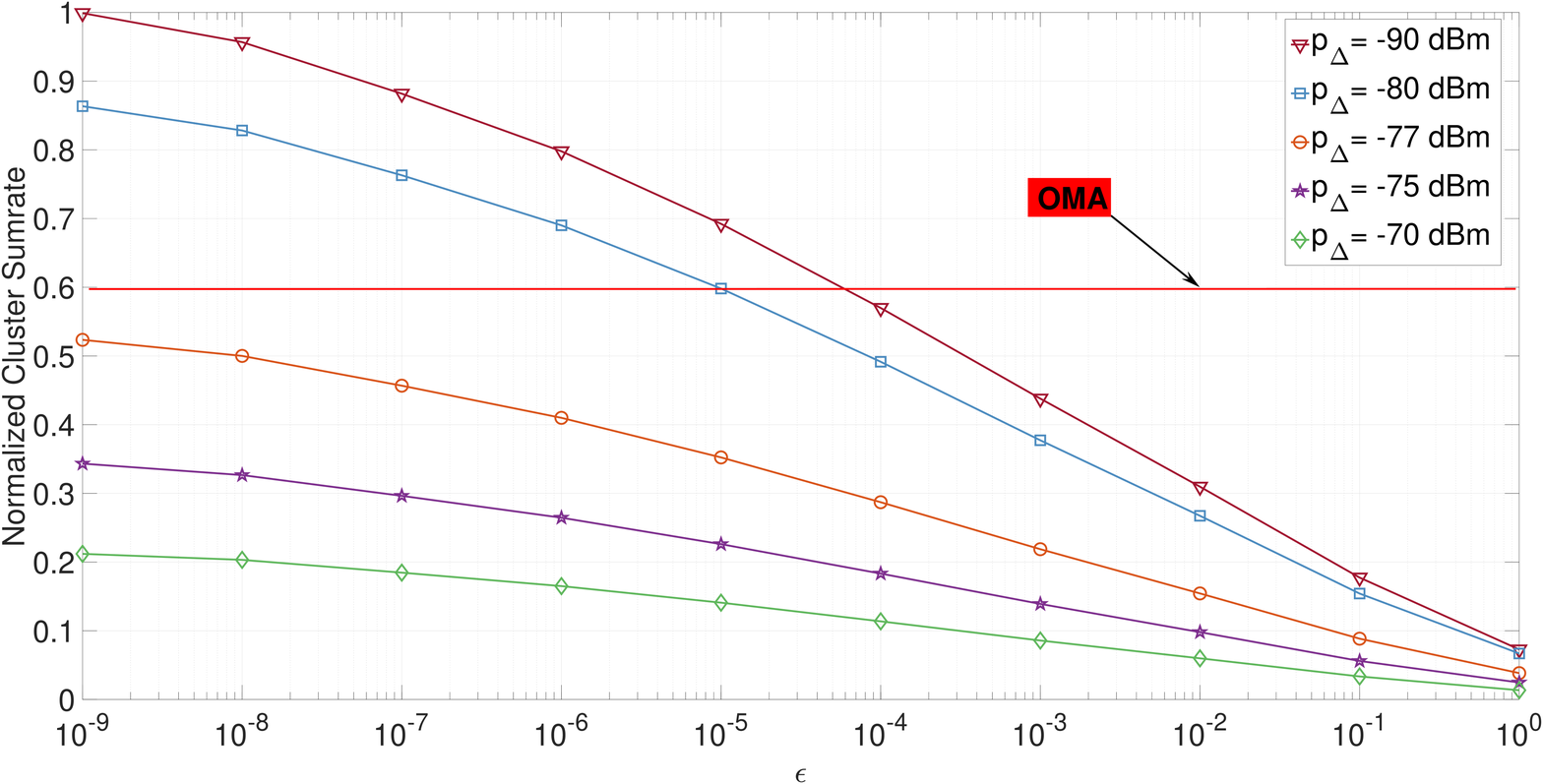}
        \caption{}
        \label{fig:epsilon}
    \end{subfigure}
    \begin{subfigure}[b]{0.5 \textwidth}
        \includegraphics[width=1\textwidth]{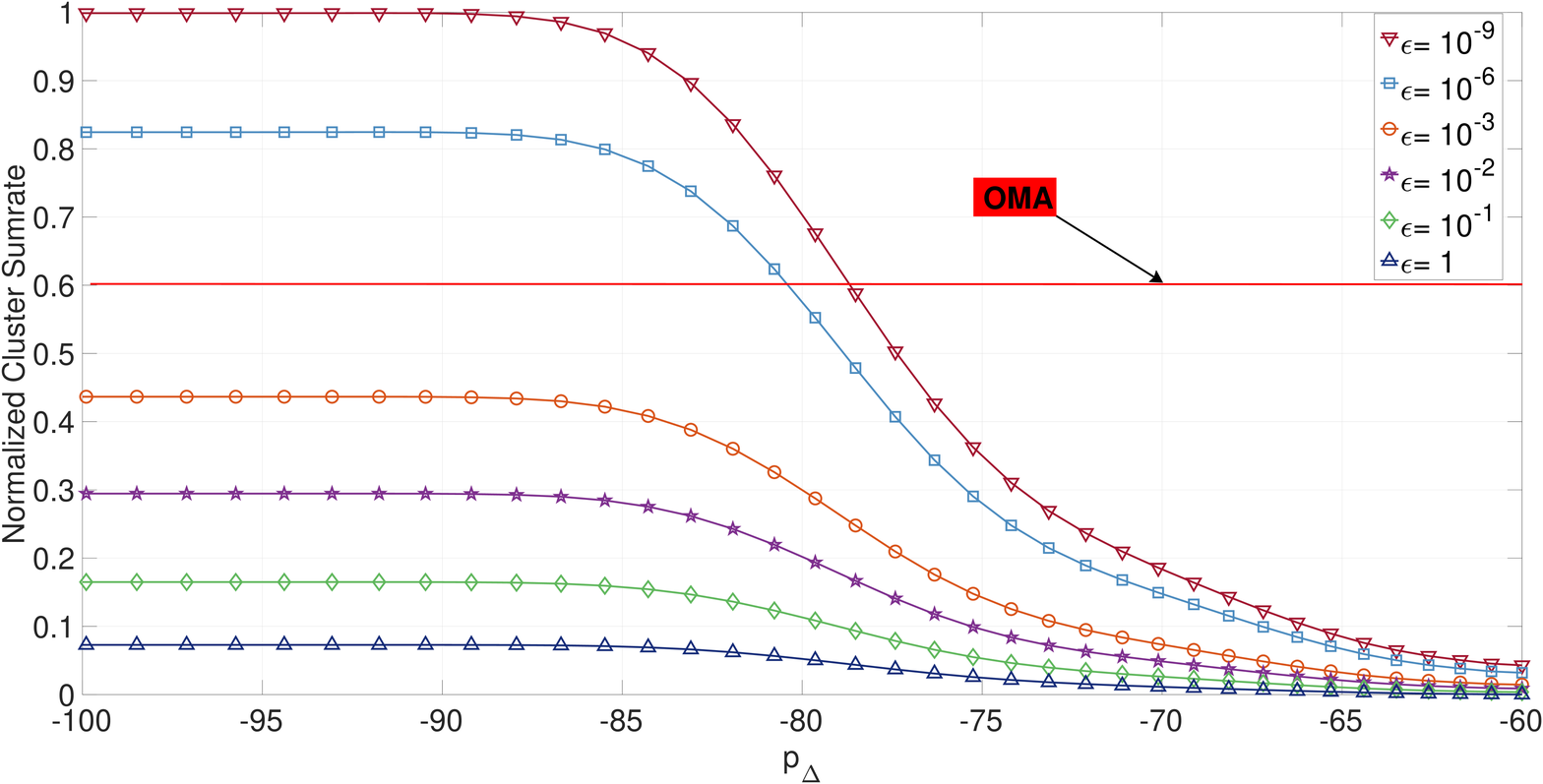}
        \caption{}
        \label{fig:delta}
    \end{subfigure}   
    \caption{Normalized sumrate vs. (a) FEF, $\epsilon$, and (b) sensitivity, $p_\Delta$.}
\end{figure}

\begin{figure}[!t]
    \centering
        \begin{subfigure}[b]{0.5\textwidth}
        \includegraphics[width=1\textwidth]{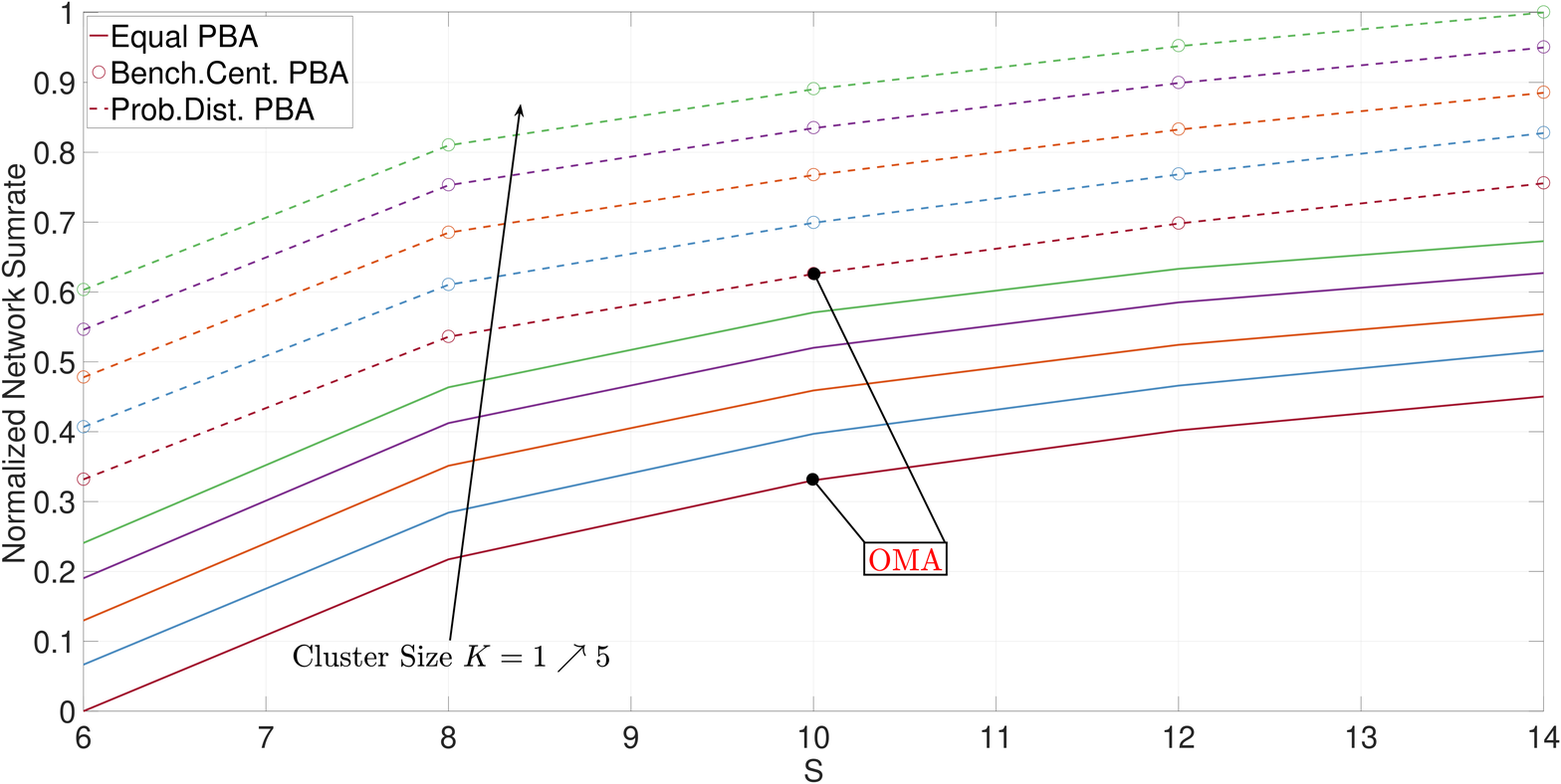}
        \caption{}
        \label{fig:lamdaS}
    \end{subfigure}
    \begin{subfigure}[b]{0.5 \textwidth}
        \includegraphics[width=1\textwidth]{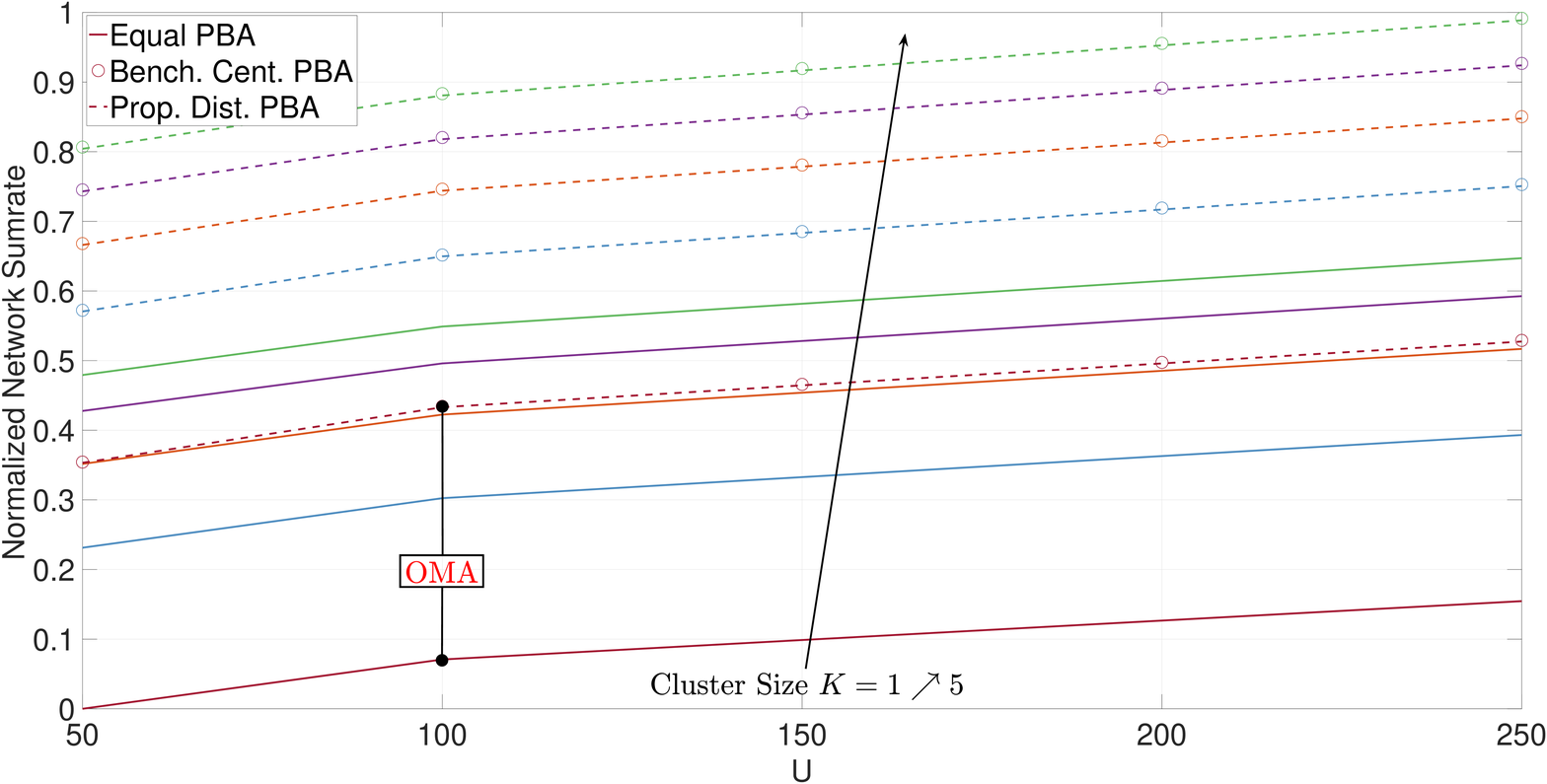}
        \caption{}
        \label{fig:lamdaU}
    \end{subfigure}   
        \begin{subfigure}[b]{0.5 \textwidth}
        \includegraphics[width=1\textwidth]{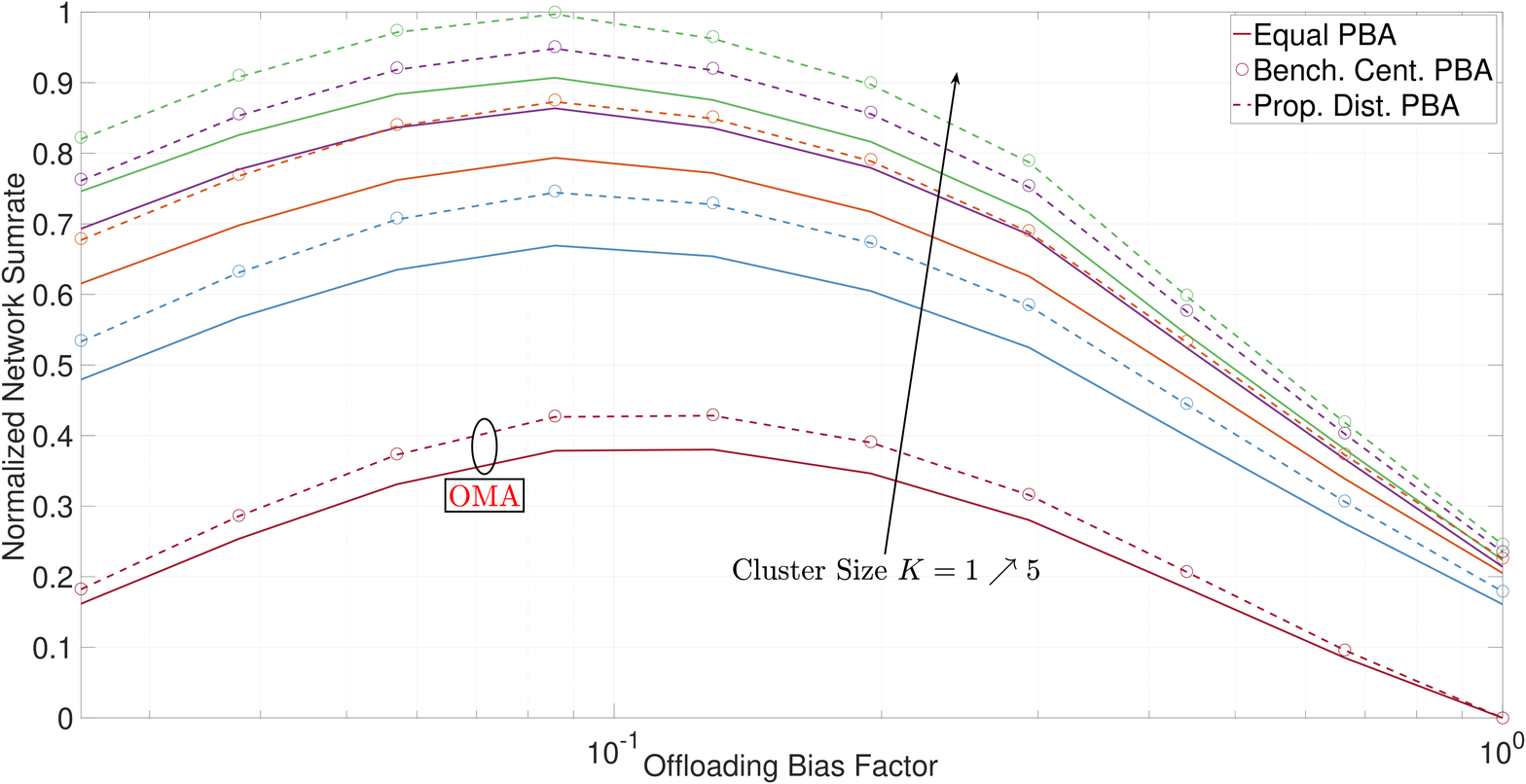}
        \caption{}
        \label{fig:beta}
    \end{subfigure}   
    \caption{Normalized network sumrate vs. a) number of SBSs, b) number of UEs, and c) the offloading bias factor.}
\end{figure}

Fig. \ref{fig:epsilon} and  Fig. \ref{fig:delta}  demonstrate the impacts of residual interference and receiver sensitivity on the normalized sumrate performance, respectively. For the sake of comparison, optimal OMA performance is also drawn with a red colored solid line. While the system performance monotonically decreases with the error factor due to the increasing residual interference, it starts to degrade dramatically for sensitivities higher than $-90$ dBm. Therefore, the efficiency of SIC receivers is a key parameter to determine the NOMA gain. As $\epsilon \rightarrow 1$ NOMA starts to behave like a co-channel OMA scheme without any interference control. The main reason behind the performance degradation caused by the sensitivity is that receiver sensitivity enforces the optimal power allocations to ensure a certain power gap, which starts deteriorating the overall performance after a certain sensitivity level.  

Fortunately, receiver sensitivities of GSM, HSPA, and LTE uplink budgets are given as $-114$ dBm,  $-123.4$ dBm, and $-123.4$ dBm, respectively. Similarly, receiver sensitivities of GSM, HSPA, and LTE downlink budgets are given as $-104$ dBm,  $-106.4$ dBm, and $-106.4$ dBm, respectively \cite{holma2009lte}. Therefore, sensitivity is not a challenge for today's cellular devices as their receiver sensitivity is below $-100$ dBm which does not have a significant effect on the NOMA performance as shown in Fig. \ref{fig:delta}. However, since NOMA has a potential to simultaneously serve massive machine type communications, one still need to take the sensitivity into account to design NOMA schemes for low-cost devices with high receiver sensitivities. As can be deduced from Fig. \ref{fig:epsilon} and Fig. \ref{fig:delta}, a realistic NOMA power control scheme must consider sensitivity and residual interference as they primarily determine the performance enhancement obtained by NOMA scheme.

\subsection{Benchmark Comparisons of Proposed Distributed PBA }
In this section, we compare proposed distributed PBA with two benchmarks: 1) \textit{Equal PBA} as in \cite{Ali2016} where authors consider equally allocated cluster powers and 2)  \textit{Centralized PBA} is obtained by geometric programming framework based on successive convex approximation method \cite{celik2017DLNOMA}. As an alternative method for the centralized scheme, one can also consider SCALE which also involves a sequence of convex approximations \cite{SCALE}. Network sumrate performance is investigated under different cluster sizes with respect to essential network parameters including $S$, $U$, and offloading bias factor. Please note that unit cluster size ($K=1$) corresponds to the power and bandwidth optimal OMA scheme where each UE has its own channel. Centralized PBA is solved by CVX which is a Matlab software for disciplined convex programming \cite{cvx}. For the distributed PBA, step sizes are set to $\nu=0.005$ which yields a convergence within an average of 200 iterations. On the other hand, the outer loop is converged within an average of 10 iterations, which is desirable as the outer loop involves message passing between SBSs and MBS. As expected, the centralized scheme delivers a better performance than the distributed one. However, since the performance gap between the centralized and distributed schemes are not significant with respect to actual sumrate values, this gap became harder to be recognized after the normalization process.     

Normalized network sumrate performance of PBA schemes with respect to increasing smallcell density and cluster sizes are presented in Fig. \ref{fig:lamdaS}. Apparently, the network performance monotonically improves with increasing BS density and cluster sizes, which can be explained as follows: While a higher number of BSs provides higher SINRs since UEs are associated with closer BS, i.e., higher channel gains, increasing cluster size improves the spectral efficiency. Please note that incrementing cluster size linearly increases the time complexity of SIC process and incur extra processing delay \cite{Andrews2005SIC}. Both centralized and distributed PBA schemes obviously outperform the equal PBA scheme. Moreover, the distributed solution gives almost the same performance with the centralized approach with its low communication overhead. During the simulations, we observe that optimal power allocation of perfect NOMA distributes available BS power to its clusters almost uniformly. However, this is not the case for imperfect NOMA since residual interference starts playing a significant role in behavior as $\epsilon \rightarrow 1$. Nevertheless, distributed PBA still yields notable improvement due to the iterative bandwidth allocation.

Network performance of PBA schemes with respect to increasing UE density and cluster sizes are shown in Fig. \ref{fig:lamdaU}. Similar to Fig. \ref{fig:lamdaS}, the network performance monotonically enhances with increasing UE density and cluster sizes. It can be deduced from Fig. \ref{fig:lamdaU} that NOMA scheme can serve a large number of UEs with better network sumrate thanks to its inherent fairness and spectral efficiency features. Finally, Fig. \ref{fig:beta} depicts the system performance with respect to different cluster sizes and offloading bias factor. NOMA gain is higher for smaller bias factors as offloading more UEs to smallcells give a better chance to form clusters with a higher channel gain disparity. Notice that improvement over equal PBA and smaller cluster sizes are also becoming gradual as $\beta \rightarrow 1$ as $\beta \rightarrow 1$ associates most of the UEs with the MBS and they compete for the same power source. Finally, sumrate performance first increases and then decreases as $\beta$ increases, which is resulted from the variation in UEs’ SINR due to the UE association. That is, clustering and UE association pairs provide the highest SINR case when $\beta \simeq 1$.

\section{Conclusions}
\label{sec:conc}
In this paper, we investigated the CF and PBA problem for imperfect NOMA in DL-HetNets. A practical SIC receiver is considered with real-life constraints and limitations including power disparity, sensitivity, and residual interference due to the error propagation, which is shown to have a significant impact on NOMA performance. After putting the cell users into a multi-partite graph, a sequential weighted bi-partite matching approach is developed for a fat and yet high-performance CF. Thereafter, we proposed a hierarchical distributed PBA scheme where the slave, secondary master, and primary master problems are responsible for UEs' power allocation, clusters' power allocation, and clusters' bandwidth allocation, respectively. Closed-form optimal power allocations and Lagrange multipliers of imperfect NOMA scheme are derived for slave problems, which are then used by the secondary master problem to update power allowances of each cluster. Numerical results show that the proposed distributed PBA scheme performs very close to the centralized benchmark. 
\appendices
\begin{figure*}
\begin{align}
\nonumber \omega_{1}=&\frac{\varpi}{q_{2}q_{3}q_{4}}
-\frac{(q_{2}-1)\rho_{2}}{q_{2}}
-\frac{(q_{3}-1)\rho_{3}}{q_{2}q_{3}} -\frac{(q_{4}-1)\rho_{4}}{q_{2}q_{3}q_{4}} 
-\frac{\epsilon_{2}\varpi(q_{2}-1)}{q_{2}q_{3}q_{4}}  -\frac{\epsilon_{2}\varpi(q_{2}-1)(q_{4}-1)}{q_{2}q_{3}q_{4}}\\
\nonumber &-\frac{\epsilon_{2}\varpi(q_{2}-1)(q_{3}-1)(q_{4}-1)}{q_{2}q_{3}q_{4}}
-\frac{\epsilon_{2}(q_{2}-1)(q_{4}-1)\rho_{4}}{q_{2}q_{3}q_{4}} -\frac{\epsilon_{2}(q_{2}-1)(q_{3}-1)\rho_{3}}{q_{2}q_{3}q_{4}}
-\frac{\epsilon_{2}(q_{2}-1)(q_{3}-1)(q_{4}-1)\rho_{3}}{q_{2}q_{3}q_{4}}\\
\label{eq:app1}
 &-\frac{\epsilon_{3}(q_{3}-1)(q_{4}-1)\rho_{4}}{q_{2}q_{3}q_{4}}-\frac{\epsilon_{3}\varpi(q_{3}-1)(q_{4}-1)}{q_{2}q_{3}q_{4}}-\frac{\epsilon_{2}\epsilon_{3}(q_{2}-1)(q_{3}-1)(q_{4}-1)\rho_{4}}{q_{2}q_{3}q_{4}} - \frac{\epsilon_{2}\epsilon_{3}\varpi(q_{2}-1)(q_{3}-1)(q_{4}-1)}{q_{2}q_{3}q_{4}}\\
\nonumber \omega_{2}=&\frac{\varpi(q_{2}-1)}{q_{2}q_{3}q_{4}}
+\frac{(q_{2}-1)\rho_{2}}{q_{2}}
-\frac{(q_{2}-1)(q_{3}-1)\rho_{3}}{q_{2}q_{3}}  -\frac{(q_{2}-1)\rho_{4}}{q_{2}q_{3}q_{4}}
+\frac{\epsilon_{2}\varpi(q_{2}-1)}{q_{2}q_{3}q_{4}}+\frac{\epsilon_{2}\varpi(q_{2}-1)(q_{4}-1)}{q_{2}q_{3}q_{4}}\\
\nonumber &
+\frac{\epsilon_{2}\varpi(q_{2}-1)(q_{3}-1)(q_{4}-1)}{q_{2}q_{3}q_{4}}
+\frac{\epsilon_{2}(q_{2}-1)(q_{4}-1)\rho_{4}}{q_{2}q_{3}q_{4}}
+\frac{\epsilon_{2}(q_{2}-1)(q_{3}-1)\rho_{3}}{q_{2}q_{3}q_{4}}
+\frac{\epsilon_{2}(q_{2}-1)(q_{3}-1)(q_{4}-1)\rho_{3}}{q_{2}q_{3}q_{4}}\\
\nonumber &
-\frac{\epsilon_{3}(q_{2}-1)(q_{3}-1)(q_{4}-1)\rho_{4}}{q_{2}q_{3}q_{4}}-\frac{\epsilon_{3}\varpi(q_{2}-1)(q_{3}-1)(q_{4}-1)}{q_{2}q_{3}q_{4}}+\frac{\epsilon_{2}\epsilon_{3}(q_{2}-1)(q_{3}-1)(q_{4}-1)\rho_{4}}{q_{2}q_{3}q_{4}}\\& +\frac{\epsilon_{2}\epsilon_{3}\varpi(q_{2}-1)(q_{3}-1)(q_{4}-1)}{q_{2}q_{3}q_{4}}\\
 \nonumber \omega_{3}=&\frac{\varpi(q_{3}-1)}{q_{2}q_{4}}
+\frac{(q_{3}-1)\rho_{3}}{q_{3}}
-\frac{(q_{3}-1)(q_{4}-1)\rho_{4}}{q_{3}q_{4}}+\frac{\epsilon_{3}\varpi(q_{3}-1)(q_{4}-1)}{q_{3}q_{4}}
+\frac{\epsilon_{3}(q_{3}-1)(q_{4}-1)\rho_{4}}{q_{3}q_{4}}\\
\label{eq:app2}\omega_{4}=&\frac{\varpi(q_{4}-1)}{q_{4}}
+\frac{(q_{4}-1)\rho_{4} }{q_{4} }
\end{align}
\begin{align}
\label{eq:kappa}
\nonumber \kappa_{i-1}^\star-\kappa_{i}^\star&=\frac{ \rho_{i}-\rho_{i-1}+\epsilon_{i}\sum_{k=i+1}^{K}\omega_k^\star -\epsilon_{i-1}\sum_{k=i}^{K}  \omega_k^\star}{ \left(\sum_{j=1}^{i-1}{\omega_j^\star} +\epsilon_{i-1} \sum_{k=i}^{K}  \omega_k^\star+ \rho_{i-1} \right) \left(\sum_{j=1}^{i-1}{\omega_j^\star} +\epsilon_{i} \sum_{k=i+1}^{K}  \omega_k^\star+ \rho_{i}^\star \right)}\\
& +\frac{ \epsilon_{i-1}\omega_{i-1}^\star}{ \left({\sum_{j=1}^{i-2}\omega_j^\star} +\epsilon_{i-1} \sum_{k=i}^{K}  \omega_k^\star+ \rho_{i-1} \right) \left(\sum_{j=1}^{i-1}{\omega_j^\star} +\epsilon_{i-1} \sum_{k=i}^{K}  \omega_k^\star+ \rho_{i-1} \right)}
\end{align}
\hrule
\end{figure*}

\section{Proofs for Lemma \ref{lem:cfp} and Lemma \ref{lem:opt_lamda}}
\label{proof:KKT}
Without loss of generality, we eliminate the cell/cluster indices and focus on the verification of KKT conditions for a 4-UE cluster at two active set examples: $\Xi_1=\{\lambda,\mu_2, \mu_3, \mu_4\}$ and $\Xi_2=\{\lambda,\varphi_3, \varphi_3,\varphi_4\}$. In the case of $\Xi_1$, optimal power levels are obtained from $\{ \mathrm{S}_1,\mathrm{S}_2^2,\mathrm{S}_2^3,\mathrm{S}_2^4 \}$ while $\{ \mathrm{S}_2^1,\mathrm{S}_3^2,\mathrm{S}_3^3,\mathrm{S}_3^4 \}$ stay as set of conditions to be satisfied, i.e., $\{\lambda, \mu_2, \mu_3, \mu_4>0\}$ and $\{\mu_1=\varphi_2=\varphi_3=\varphi_4=0\}$. Therefore, solving equations $\{ \mathrm{S}_1,\mathrm{S}_2^2,\mathrm{S}_2^4,\mathrm{S}_2^4 \}$, optimal power levels can be obtained as in (\ref{eq:app1})-(\ref{eq:app2}).

Optimal power allocations for all combinations can be derived by following the same steps for different cluster sizes, which finally yields the generalized formulas given in Lemma \ref{lem:cfp}. Notice that Lemma \ref{lem:cfp} ensures the primal KKT conditions and reduces to perfect NOMA case in \cite{Ali2016} if all $\epsilon$ terms are set to zero. For verification of dual conditions and proving Lemma \ref{lem:opt_lamda}, let us  rewrite (\ref{eq:Lag_derv_omg2}) as follows
\begin{align}
 \label{eq:Lag_derv_omg11}
  \kappa_{1}^{\star} &= \lambda+  \mu_1 + \sum_{j=2}^{K} \varphi_j  +\sum_{k =2}^{K}\mu_k \epsilon_k \left( q_k -1\right) \\
 \nonumber & \kappa_i^\star = \lambda  -\varphi_i + \sum_{j=i+1}^{K} \varphi_j  -  \mu_i  + \sum_{j =1}^{i-1} \mu_j \epsilon_j \left( q_j -1\right)  \\
 \label{eq:Lag_derv_omg22}
  &+ \sum_{k =i+1}^{K}  \mu_k \left( q_k -1\right) , i \geq 2
\end{align}
where $\kappa_1^\star$ and $\kappa_i^\star$ are constants and can be obtained by substituting optimal power levels given by Lemma \ref{lem:cfp} into first two and three term of (\ref{eq:Lag_derv_omg2}), respectively. Nothing that  $\mu_{c,r}^{2,\star}=0$ when $\varphi_{c,r}^{2,\star}$ and vice versa, $\mu_{c,r}^{2,\star}$ and  $\varphi_{c,r}^{2,\star}$ is obtained from subtracting (\ref{eq:Lag_derv_omg22}) from  (\ref{eq:Lag_derv_omg11}). Accordingly, the proof of Lemma \ref{lem:opt_lamda} follows from applying this subtraction recursively and employing previously found optimal values at each step. To verify KKT conditions are satisfied, we need all Lagrange multipliers to be non-negative, this is satisfied by ensuring $\kappa_i^\star-\kappa_{i+1}^\star$ non-negative. After some manipulations $\kappa_i^\star-\kappa_{i+1}^\star$ can be written as in \eqref{eq:kappa} which is indeed non-negative thanks to the descending ordering of the $\omega_i$ and $\rho_i$ values and non-negative terms, i.e., $\omega_i^\star \geq 0$, $\rho_i \geq 0$,  and $q_i\geq 0$.
\section{ Convexity Analysis of Primary Master Problem}
\label{proof:conv}
Without loss of generality, let us omit the iteration indices and consider single cell. While total bandwidth allocation constraint of the primary master problem is affine, QoS constraints are concave as the second derivative of  the achievable rate, $\frac{\partial^2}{\partial (\theta_c^r)^2}\theta_c^r \log_2\left( 1+\gamma_{c,r}^i(\theta_c^r)\right) \leq 0$, is always non-positive due to non-negative optimal powers and parameters. However, the objective function is the summation of individual rates which are coupled by cluster bandwidth, which is not always negative and thus not convex. Therefore, employing the utility of the previous iteration relaxes this highly non-convex objective into a linear function of bandwidths.     
\bibliographystyle{IEEEtran}

\bibliography{arxiv}

\vspace*{-3\baselineskip}
\begin{IEEEbiography}[{\includegraphics[width=1.1in,height=1.25in]{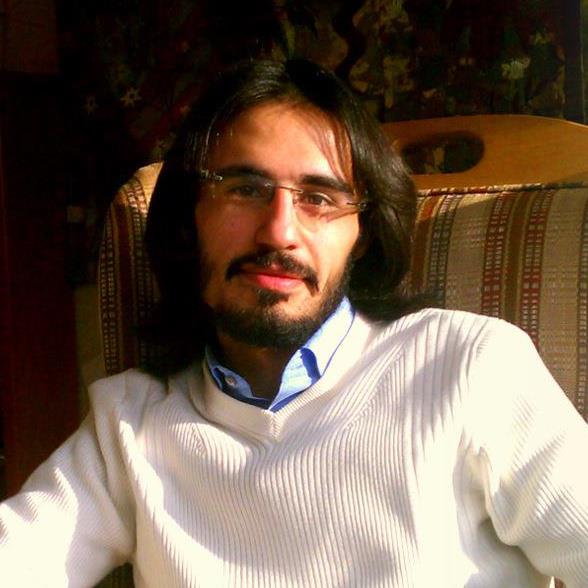}}]{Abdulkadir Celik}
(S'14-M'16) received the B.S. degree in electrical-electronics engineering from Selcuk University, Konya, Turkey in 2009, the M.S. degree in electrical engineering in 2013, the M.S. degree in computer engineering in 2015, and the Ph.D. degree in co-majors of electrical engineering and computer engineering in 2016, all from Iowa State University, Ames, IA, USA. He is currently a postdoctoral research fellow at Communication Theory Laboratory of King Abdullah University of Science and Technology (KAUST). His current research interests include but not limited to 5G and beyond, wireless data centers, UAV assisted cellular and IoT networks, and underwater optical wireless communications, networking, and localization.
\end{IEEEbiography}
\newpage 
\begin{IEEEbiography}[{\includegraphics[width=1.1in,height=1.25in]{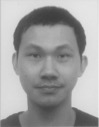}}]{Ming-Cheng Tsai}(S'18) was born in Fujian, China. He received his B.E. degree in electrical engineering from the National Taipei University of Technology, Taipei, Taiwan, in 2015. From 2015 to 2018, he was a student of Communication Engineering at the National Tsinghua University. He is currently pursuing his M.S./Ph.D. degree in Electrical Engineering at King Abdullah University of Science and Technology. His research interests include device to device communication, digital communication, and error correcting codes. 
\end{IEEEbiography}
\vspace*{-3\baselineskip}
\begin{IEEEbiography}[{\includegraphics[width=1.1in,height=1.25in]{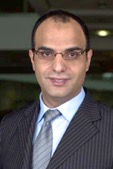}}]{Redha M. Radaydeh}(S'05-M'07-SM'13) was born in Irbid, Jordan, on November 12, 1978. He received the B.S. and M.S. degrees from Jordan University of Science and Technology (JUST), Irbid, in 2001 and 2003, respectively, and the Ph.D. degree from University of Mississippi, Oxford, MS, USA, in 2006, all in electrical engineering. He worked at JUST, King Abdullah University of Science and Technology (KAUST), Texas A\&M University at Qatar (TAMUQ), and Alfaisal University as an associate professor of electrical engineering. He also worked as a remote research scientist with KAUST and was a visiting researcher with Texas A\& M University (TAMU), College Station, TX. Currently, he is a faculty member with the electrical engineering program at Texas A\&M University-Commerce (TAMUC), Commerce, TX. His research interests include broad topics on wireless communications, and design and performance analysis of wireless networks.
\end{IEEEbiography}
\vspace*{-3\baselineskip}
\begin{IEEEbiography}[{\includegraphics[width=1.1in,height=1.25in]{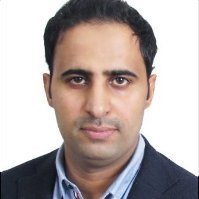}}]{Fawaz S. Al-Qahtani}
(M'10) received the B.Sc. in electrical engineering from King Fahad University of Petroleum and Minerals (KFUPM), Saudi Arabia in 2000 and M.Sc. in Digital Communication Systems from Monash University, Melbourne, Australia in 2005, and Ph.D. degree in Electrical and Computer Engineering, from RMIT University, Australia in December, 2009. He is currently working as IP commercialization manager for ICT portfolio at Research, Development, and Innovation (RDI) in Qatar Foundation, Doha, Qatar. From May 2010 to August 2017, he was research scientist with Texas A\&M University at Qatar, Education City, Doha, Qatar. His research has been sponsored by Qatar National Research Fund (QNRF). He was awarded of JSERP and NPRP projects. His current research interests include channel modeling, applied signal processing, MIMO communication systems, cooperative communications, cognitive radio systems, free space optical, physical layer security, and device-to-device communication. He is the author or co-author of 100 technical papers published in scientific journals and presented at international conferences.
\end{IEEEbiography}
\vspace*{-3\baselineskip}
\begin{IEEEbiography}[{\includegraphics[width=1.1in,height=1.25in]{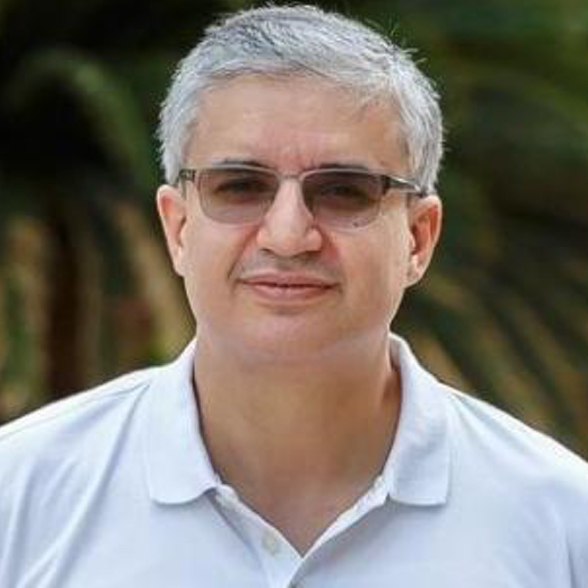}}]{Mohamed-Slim Alouini}
(S'94-M'98-SM'03-F'09) was born in Tunis, Tunisia. He received the Ph.D. degree in Electrical Engineering from the California Institute of Technology (Caltech), Pasadena, CA, USA, in 1998. He served as a faculty member in the University of Minnesota, Minneapolis, MN, USA, then in the Texas A\&M University at Qatar, Education City, Doha, Qatar before joining King Abdullah University of Science and Technology (KAUST), Thuwal, Makkah Province, Saudi Arabia as a Professor of Electrical Engineering in 2009. His current research interests include the modeling, design, and performance analysis of wireless communication systems.
\end{IEEEbiography}

\end{document}